\documentclass[twocolumn]{autart}  % Enable this line and disable the 
\usepackage{graphicx}     % Include this line if your 
\usepackage{xcolor}
\usepackage{amsmath}
\let\theoremstyle\relax
\usepackage{amsthm}
\usepackage[english]{babel}
\usepackage{amssymb}
\usepackage[numbers]{natbib}

\usepackage[justification=centering]{caption}
\newtheorem{definition}{Definition}

\newcommand{\bx}{\mathbf{x}}
\newcommand{\bbf}{\mathbf{f}}
\newcommand{\bg}{\mathbf{g}}
\newcommand{\bu}{\mathbf{u}}

\begin{document}

\begin{frontmatter}
%\runtitle{Insert a suggested running title} % Running title for regular 
                       % papers but only if the title 
                       % is over 5 words. Running title 
                       % is not shown in output.

\title{Designing Barrier Functions for Graceful Safety Control\thanksref{footnoteinfo}} % Title, preferably not more 
                        % than 10 words.

\thanks[footnoteinfo]{This paper was not presented at any IFAC meeting. Corresponding author: Hosam K. Fathy. Tel. +1(301)405-6617. \textit{Email addresses}: $^{\rm a}$ymoon@umd.edu, $^{\rm b}$orosz@umich.edu, $^{\rm c}$hfathy@umd.edu}

\author[1]{Yejin Moon},  % Add the 
\author[2]{G\'abor Orosz},        % e-mail address 
\author[3]{Hosam K. Fathy} % (ead) as shown

\address[1]{Department of Mechanical Engineering, University of Maryland, College Park, MD 20742, USA} % Please supply                       
\address[2]{Department of Mechanical Engineering and Department of Civil and Environmental Engineering, University of Michigan, MI 48109, USA.}       % full addresses
\address[3]{Department of Mechanical Engineering, University of Maryland, College Park, MD 20742, USA}    % here.

\begin{keyword}            % Five to ten keywords, 
        % chosen from the IFAC 
Control barrier functions, graceful safety control, nonlinear systems, collision avoidance \end{keyword}               % keyword list or with the 
                     % help of the Automatica 
                     % keyword wizard

\begin{abstract}  
This paper examines the problem of achieving ``grace" when controlling dynamical systems for safety, which is defined in terms of providing multi-layered safety assurances. 
Namely, two safety layers are created: a primary layer that represents a desirable degree of safety, and a secondary failsafe layer. 
Graceful control then involves ensuring that even if the primary layer is breached, the failsafe layer remains forward invariant. 
The paper pursues this goal by constructing a safety constraint that combines the concepts of zeroing and reciprocal control barrier functions with regard to the primary and secondary safe sets, respectively. 
This constraint is analogous to a stiffening spring, making it possible to construct energy-based analytical proofs of the resulting graceful safety guarantees. 
The proposed approach is developed for systems with a relative degree of either 1 or 2, the latter case being particularly useful for mechanical systems. 
We demonstrate the applicability of the method using a wall collision avoidance example. 
This demonstration highlights the benefits of the proposed approach compared to traditional benchmarks from the literature. 
\end{abstract}

\end{frontmatter}

%%%%%%%%%%%%%%%%%%%%%%%%%%%%%%%%%%%%%%%%%%%%%%%%%%%%%%%%%%%%%%
\section{Introduction}
%%%%%%%%%%%%%%%%%%%%%%%%%%%%%%%%%%%%%%%%%%%%%%%%%%%%%%%%%%%%%%

This paper addresses the challenge of controlling dynamical systems to ensure \textit{safety} and \textit{grace} simultaneously. 
The research is motivated by the growing use of control barrier function (CBF) methods to ensure safety. 
These methods work by ensuring the \textit{forward invariance} of user-defined safe sets in state space, meaning that if a given system is initialized within a safe set, it is guaranteed to remain within it. 
Practical examples of safety-critical control include avoiding unsafe inter-vehicle spacing on highways \cite{molnar2023safety}, excessive temperatures in battery packs \cite{moon2025graceful}, hazardous airplane flight altitudes \cite{molnar2025collision}, etc. 
Unfortunately, there are many practical scenarios where a given system's safe set is breached due to factors such as large external disturbances, component failures, etc. 
For instance, safe inter-vehicle spacing can be compromised when another vehicle ``cuts in" too close in front of an ego vehicle. Similarly, safe battery pack temperatures can be compromised when a battery cell within a pack catches fire, perhaps due to an event such as nail penetration. 
In such scenarios, a critical need arises for additional \textit{graceful safety} assurances. 
We define \textit{grace}, in this context, as the ability to provide a \textit{multi-layered} safety guarantee, where if the \textit{primary safety} layer is breached, a \textit{secondary failsafe} layer remains forward-invariant. 
One particularly inspiring example is the graceful landing of US Airways Flight 1549 following an engine failure on January 15, 2009 \cite{otfinoski2019captain}. The pilot succeeded in achieving the failsafe goal of landing gracefully in the Hudson river, even when the primary goal of maintaining a safe altitude was no longer viable. 

There is a rich existing literature on CBF-based safety-critical control. 
Two main classes of CBFs are defined, namely, zeroing and reciprocal CBFs~\cite{ames2019control,ames2014control,ames2016control,xu2018constrained}. 
Zeroing and reciprocal CBFs deem a given system ``safe" if a user-defined barrier function is positive or finite, respectively. 
Safety is then ensured by preventing this function from becoming negative or infinite over time, respectively. 
Building on these core foundations, the literature explores fundamental topics such as ensuring finite-time convergence to safety~\cite{li2018formally} as well as designing exponential~\cite{nguyen2016exponential}, high-order~\cite{xiao2019control}, backstepping~\cite{cohen2024safety}, robust~\cite{gurriet2018towards,lopez2020robust,alan2025generalizing}, adaptive~\cite{taylor2020adaptive,xiao2021adaptive}, and stochastic~\cite{clark2019control,fan2020bayesian,santoyo2021barrier,sarkar2020high} CBFs. 
The literature also explores the intersection between safety-critical control and machine learning~\cite{wang2021learning,xiao2021barriernet,xiao2022differentiable,yaghoubi2020training}, reinforcement learning~\cite{cheng2019end,choi2020reinforcement}, episodic learning~\cite{csomay2021episodic,taylor2020learning}, and Bayesian learning~\cite{fan2020bayesian,khojasteh2020probabilistic}. 
In addition to this theoretical research, there is an extensive literature on different applications of safety-critical control. Major application domains include automotive and robotic systems, where CBFs are used for adaptive and connected cruise control~\cite{ames2014control,ames2016control,chinelato2020safe,he2018safety,molnar2023safety}, lane keeping,~\cite{ames2016control,jiang2024safety,xu2017correctness}, safe car racing~\cite{dallas2025control,zeng2021safety}, intersection control~\cite{ma2021model}, connected/automated vehicle control~\cite{alan2023control,he2020improving}, motion control of mobile robots~\cite{desai2022clf}, obstacle avoidance~\cite{panagou2013multi}, robot collision avoidance~\cite{hou2023robust,machida2021consensus}, robot navigation~\cite{agrawal2017discrete,gao2023learning}, and robot traffic merging~\cite{xiao2023barriernet}. 
Beyond these fields, safety control also has applications in areas as diverse as battery management~\cite{hailemichael5110597adaptive,vyas2022thermal} and biomedical systems~\cite{ames2020safety,moon2023safe,yin2025safe}.

Traditional CBF methods have both strengths and limitations. 
On the one hand, these methods typically furnish switching control laws capable of prioritizing other control design objectives, such as reference tracking, when possible~\cite{wieland2007constructive}, while prioritizing safety when it is critical to do so~\cite{goswami2024collision}. 
On the other hand, the existing literature predominantly embraces a single-layer, black-and-white separation between ``safe" and ``unsafe" states. 
Unfortunately, this mindset does not automatically address the need for multi-layered safety guarantees, despite the fact that this need is ubiquitous in practice.  
In the automotive and battery domains, for instance, there is a distinction between ``unsafe" scenarios where inter-vehicle spacing becomes tight or one battery cell catches fire, versus ``catastrophes" where vehicles collide or a cell fire cascades into a multi-cell thermal runaway. 
``Grace" can be intuitively defined as a system's ability to prevent unsafe scenarios from becoming catastrophic~\cite{edwards2017towards,herlihy1991specifying,shelton2003framework,weiss2020fail}. 
Previous work by the authors shows that CBF methods do not always achieve such grace~\cite{moon2024graceful,moon2025graceful}. 

The main goal of this paper is to extend earlier, preliminary research by the authors~\cite{moon2024graceful,moon2025graceful} into a framework for designing graceful safety controllers. 
This framework utilizes a \textit{single} constraint to enforce a \textit{two-layer} definition of safety. 
This produces safety guarantees that are analogous to simultaneously enforcing a zeroing CBF around a primary safety boundary and a reciprocal CBF around a secondary failsafe boundary. 
We prove these guarantees for systems with a relative degree of either 1 or 2. 
This is especially important for mechanical systems, where safety is often defined in terms of position variables, but actuation often involves manipulating acceleration, leading to a relative degree of 2. 

The remainder of this paper is organized as follows. 
Section~\ref{sec:background} surveys some key mathematical foundations from the existing literature. 
Section~\ref{sec:examples} then presents a wall collision example that motivates the need for graceful safety control, especially for systems of relative degree 2. 
Section~\ref{sec:grace} then presents and analyzes the proposed graceful safety control framework, while Section~\ref{sec:simulation} applies this framework to the wall collision avoidance example. Finally, Section~\ref{sec:conclusion} summarizes the paper's conclusions.

%%%%%%%%%%%%%%%%%%%%%%%%%%%%%%%%%%%%%%%%%%%%%%%%%%%%%%%%%%%%%%
\section{Mathematical Background}\label{sec:background}
%%%%%%%%%%%%%%%%%%%%%%%%%%%%%%%%%%%%%%%%%%%%%%%%%%%%%%%%%%%%%%

We begin by introducing background information on reciprocal, zeroing, high-order, and exponential CBFs based on the existing literature, see e.g., \cite{ames2019control,ames2014control,ames2016control,nguyen2016exponential,xiao2019control}. 
Specifically, consider a dynamic system governed by the state-space model: 
\begin{equation} \label{eq.ControlAffineSys}
   \dot{\bx}(t) = \bbf(\bx,\bu),
\end{equation}
where ${\bbf : \mathbb{R}^n \times \mathbb{R}^m \to \mathbb{R}^n}$ is a locally Lipschitz function. 
The symbols ${\bx \in X \subseteq \mathbb{R}^n}$ and ${\bu \in U \subseteq \mathbb{R}^m}$ represent the state and input vectors, respectively, with $X$ and $U$ denoting the sets of admissible states and inputs. 
The literature often makes two assumptions regarding the above dynamics. 
The first assumption is that for any initial state ${\bx(0) \in X}$, there exists a unique solution to the system over a maximum interval of existence, ${I(\bx(0))=[0,\tau)}$. 
This is often referred to as forward completeness.
The second assumption is that the system dynamics are input affine, i.e., the function ${\bbf(\bx,\bu)}$ can be written as ${\bbf(\bx,\bu) = \bbf_{\rm a}(\bx)+\bg_{\rm a}(\bx)\bu}$. 
This is useful for posing safety control as a convex program. 

In order to ensure the safety of the system, one can enforce the forward invariance of a safe set. 
This means that if $\bx(0)$ starts inside of the safe set, $\bx(t)$ will stay inside of the safe set for all ${t\in I(\bx(0))}$. 
In the CBF literature, the safe set ${S \subseteq \mathbb{R}^n}$ is defined as the superlevel set of a continuously differentiable function ${h(\bx):\mathbb{R}^n \rightarrow\mathbb{R}}$. 
In other words, the safe set $S$ consists of state variables for which ${h(\bx)\geq 0}$. 
More precisely, we write
 \begin{equation} \label{eq.SafeSet}
   \begin{split}
   S = \{ \bx \in\mathbb{R}^n : h(\bx) \geq 0 \}, 
   \\
   \partial S = \{ \bx \in\mathbb{R}^n : h(\bx) = 0 \}, 
   \\
   \mathrm{Int}(S) = \{ \bx \in\mathbb{R}^n : h(\bx) > 0 \}. 
   \end{split}
 \end{equation}
That is, the boundary of the safe set $\partial S$ (i.e., the line distinguishing safety vs. lack thereof) is at ${h(\bx)=0}$. 
If the state $\bx$ makes $h(\bx)$ positive, then it lies strictly inside the interior $\mathrm{Int}(S)$ of the set $S$. 
Both reciprocal and zeroing CBFs guarantee the forward invariance of the safe set by appropriately designing an extended class-$\mathcal{K}$ function ${\alpha(r): \mathbb{R} \rightarrow \mathbb{R}}$, i.e., a monotonically increasing function with ${\alpha(0)=0}$.

%%%%%%%%%%%%%%%%%%%%%%%%%%%%%%%%%%%%%%%%%%%%%%%%%%%%%%%%%%%%%%
\subsection{Zeroing Control Barrier Function} \label{sec. zCBF}
%%%%%%%%%%%%%%%%%%%%%%%%%%%%%%%%%%%%%%%%%%%%%%%%%%%%%%%%%%%%%%

A zeroing CBF ensures safe set invariance by changing the sign of the CBF at the safe set boundary~\cite{ames2019control}. 
The definition of zeroing CBF is introduced below.
\\

\begin{definition} \label{def.zCBF}
Consider the control system (\ref{eq.ControlAffineSys}) and the safe set ($\ref{eq.SafeSet}$) for the continuously differentiable function $h(\bx)$. 
Suppose there exist an extended class-$\mathcal{K}$ function $\alpha$ such that
\begin{equation}\label{eq:zCBF}
    \sup_{\bu\in U} \left[
    %\frac{{\rm d}h}{{\rm d}t}
    \dot{h}(\bx,\bu)
    \right] > -\alpha(h(\bx)),
\end{equation}
${\forall \bx \in S}$, then ${h(\bx):\mathbb{R}^n \rightarrow \mathbb{R}}$ is a zeroing CBF. 
\end{definition}

Based on Definition~\ref{def.zCBF} one may formulate the following theorem.
\\

\theoremstyle{theorem}
\newtheorem{theorem}{Theorem}
\begin{theorem}
Consider a continuously differentiable zeroing CBF $h(\bx)$ for the control system \eqref{eq.ControlAffineSys}. 
Then, any locally Lipschitz continuous controller $\bu(\bx)$ that satisfies 
\begin{equation}\label{eq:zCBFset}
  \dot{h}(\bx,\bu)  \geq -\alpha(h(\bx)),
\end{equation}
guarantees the forward invariance of $S$.  
\end{theorem}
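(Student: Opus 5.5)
The plan is to prove forward invariance of $S$ by a comparison-lemma argument applied to the scalar signal $h(\bx(t))$ along closed-loop trajectories. Under the closed loop $\dot{\bx}=\bbf(\bx,\bu(\bx))$, the right-hand side is locally Lipschitz as a composition of locally Lipschitz maps. Hence for every $\bx(0)\in S$ there is a unique solution on the maximal interval $I(\bx(0))=[0,\tau)$. Because $h$ is continuously differentiable and $\bx(t)$ is $C^1$, the map $t\mapsto h(\bx(t))$ is $C^1$ on $I(\bx(0))$. Condition \eqref{eq:zCBFset} then gives the differential inequality
\begin{equation*}
\frac{{\rm d}}{{\rm d}t}h(\bx(t)) \geq -\alpha\big(h(\bx(t))\big), \qquad t\in I(\bx(0)).
\end{equation*}
Here I read \eqref{eq:zCBFset} as holding for all $\bx$ in an open set containing $S$. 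The statement is loose on this point, and requiring it only on $S$ would be circular.

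Next I compare with the scalar initial value problem $\dot{y}=-\alpha(y)$, $y(0)=h(\bx(0))\geq 0$. Since $\alpha(0)=0$, the constant $y\equiv 0$ is an equilibrium. Monotonicity of $\alpha$ gives $-\alpha(y)\le 0$ for $y\geq 0$, so solutions starting at $y(0)\ge 0$ decrease but should not cross $0$. To apply the standard comparison lemma (e.g., Khalil, Lemma 3.4) I need uniqueness of solutions of $\dot y=-\alpha(y)$, so I will assume $\alpha$ is locally Lipschitz, as is customary in the CBF literature. With uniqueness, $y(0)\geq 0$ implies $y(t)\geq 0$ for all $t$, and the comparison lemma yields $h(\bx(t))\geq y(t)\geq 0$. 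Hence $\bx(t)\in S$ for all $t\in I(\bx(0))$.

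The main obstacle is exactly the case where $\alpha$ is merely continuous and increasing. Then solutions of $\dot y=-\alpha(y)$ may be non-unique, and the comparison lemma can fail. If so, I would instead argue directly by contradiction, which needs no regularity of $\alpha$:

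\begin{enumerate}
\item Suppose $h(\bx(t_1))<0$ for some $t_1$. Let $t_0=\sup\{t<t_1: h(\bx(t))\geq 0\}$, so that $h(\bx(t_0))=0$ and $h<0$ on $(t_0,t_1]$.
\item On $(t_0,t_1]$ we have $-\alpha(h)\geq 0$, because $\alpha$ is increasing with $\alpha(0)=0$. The inequality then gives $\frac{{\rm d}}{{\rm d}t}h\geq 0$ there.
\item Integrating from $t_0$ gives $h(\bx(t_1))\geq h(\bx(t_0))=0$, a contradiction.
\end{enumerate}

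This direct argument requires that \eqref{eq:zCBFset} hold slightly outside $S$, which is the same caveat noted above. I would present it as the primary proof, since it uses only monotonicity and $\alpha(0)=0$, and keep the comparison-lemma route as a remark.
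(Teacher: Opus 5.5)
The paper does not prove this theorem. It is quoted as background from the CBF literature, where the usual arguments are Nagumo's sub-tangentiality theorem or the comparison lemma with a locally Lipschitz $\alpha$. Your primary argument is correct and more elementary than either: take the last time $t_0$ with $h\ge 0$, note $\frac{\mathrm{d}}{\mathrm{d}t}h\ge-\alpha(h)\ge 0$ on $(t_0,t_1]$, and apply the mean value theorem. It uses only $\alpha(0)=0$ and monotonicity of $\alpha$. It needs neither uniqueness of solutions of $\dot y=-\alpha(y)$ nor any regularity of $h$ on $\partial S$.

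The price is that the inequality must hold outside $S$, near $\partial S$. This matches how the paper uses the filter: $u^*=\max\{u_\mathrm{d},u_\mathrm{sf}\}$ is defined for every state, and Example~\#1 starts at $x(0)=2\ \mathrm{m}<x_\mathrm{sf}$.

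Your remark that assuming the inequality only on $S$ ``would be circular'' is not right, though. The strict inequality in Definition~\ref{def.zCBF} at $\bx\in\partial S$ gives $\sup_{\bu}\nabla h(\bx)\bbf(\bx,\bu)>0$, hence $\nabla h(\bx)\neq\mathbf{0}$. So the tangent cone to $S$ at boundary points is the half-space $\{\mathbf{v}:\nabla h(\bx)\mathbf{v}\ge 0\}$. Then $\dot h(\bx,\bu(\bx))\ge-\alpha(0)=0$ on $\partial S$, together with uniqueness of closed-loop solutions, yields forward invariance by Nagumo's theorem. That route needs the condition only on $\partial S$, but it relies on the regular-value property, which your argument avoids.

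One small detail: if you assume the inequality only on an open $N\supseteq S$, you must shrink $t_1$ toward $t_0$ so that $\bx([t_0,t_1])\subset N$. This is possible because $\bx(t_0)\in S$ and $\bx$ is continuous. Your comparison-lemma remark needs the same continuation step, since it asserts the differential inequality on all of $I(\bx(0))$ before knowing the trajectory stays in $N$.
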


One may notice that a strict inequality is used in \eqref{eq:zCBF}
while equality is used in \eqref{eq:zCBFset}.
The former one is required for ensuring Lipschitz continuity of the controller while the latter is used in quadratic programs when synthesizing safety-critical controllers \cite{alan2023control}.

We also remark that most often the linear function ${\alpha(r):=\gamma r}$ is chosen leading to the condition
\begin{equation}\label{eq:reldeg1CBF_exp}
\dot{h}(\bx,\bu) \geq -\gamma h(\bx).
\end{equation}

%%%%%%%%%%%%%%%%%%%%%%%%%%%%%%%%%%%%%%%%%%%%%%%%%%%%%%%%%%%%%%  
\subsection{Reciprocal Control Barrier Function} \label{sec. rCBF}
%%%%%%%%%%%%%%%%%%%%%%%%%%%%%%%%%%%%%%%%%%%%%%%%%%%%%%%%%%%%%%

Reciprocal CBFs are  typically represented by the symbol $B(\bx)$ in literature and are defined inside the safe set (i.e., on $\mathrm{Int}(S)$). 
These prevent the state from leaving the safe set by becoming infinite at the boundary $\partial S$ of the safe set. 
Hence,
\begin{equation}
\inf_{\bx \in\mathrm{Int}(S)} B(\bx) > 0, \quad
\lim_{\bx \rightarrow \partial S} B(\bx) \rightarrow \infty. 
\end{equation}
The definition of the reciprocal CBF used to enforce the forward invariance of $\mathrm{Int}(S)$, is stated below.
\\

\theoremstyle{definition}
\begin{definition} \label{def.rCBF}
Consider the control system (\ref{eq.ControlAffineSys}) and the safe set ($\ref{eq.SafeSet}$) for the continuously differentiable function $h(\bx)$. 
Suppose there exist extended class-$\mathcal{K}$ functions $\alpha_1, \alpha_2, \alpha_3$ such that
\begin{equation}\label{eq:rCBF}
\begin{split}
    &\frac{1}{\alpha_1(h(\bx))}\leq B(\bx) \leq \frac{1}{\alpha_2(h(\bx))},
    \\
    &\inf_{\bu \in U} \left[
    %\frac{{\rm d}B}{{\rm d}t}
    \dot{B}(\bx,\bu)
    \right]\leq \alpha_3\left(h(\bx)\right),
    \end{split}
\end{equation}
${\forall \bx \in\mathrm{Int}(S)}$. 
Then ${B(\bx): \mathrm{Int}(S) \rightarrow \mathbb{R}}$ is a reciprocal CBF.
\end{definition}

%In the above inequality, if $\alpha_3$ and $\frac{\partial B(\bx)}{\partial \bx}$ are locally Lipschitz continuous, $B(\bx)$ is locally Lipschitz continuous as well~\cite{ames2016control}. 
% Given Definition~\ref{def.rCBF}, the set of control inputs that can guarantee the forward invariance of the safe set can be written as follows, assuming an input affine system.
% % \begin{equation}
% %   K_\mathrm{rCBF} (x) = \left\{u\in U: \dot{B}(\bx,\bu) \leq \alpha_3\left(h(\bx)\right) \right\},
% % \end{equation}

Given Definition~\ref{def.rCBF}, one may formulate the following theorem.
\\

\theoremstyle{theorem}
\begin{theorem}
  Consider a continuously differentiable reciprocal CBF $B(\bx)$ for the control system \eqref{eq.ControlAffineSys} on the safe set (\ref{eq.SafeSet}). Then, any locally Lipschitz continuous controller $u$ that satisfies
\begin{equation}
\dot{B}(\bx,\bu) \leq \alpha_3\left(h(\bx)\right) ,
\end{equation}  
  guarantees the forward invariance of $\mathrm{Int}(S)$.  
\end{theorem}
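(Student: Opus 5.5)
The plan is to argue by contradiction. Along the closed-loop trajectory, I will show that $B$ obeys a comparison inequality that keeps it bounded on every bounded time interval. This is incompatible with $B\to\infty$, which is what would happen if the state reached $\partial S$. Since $\bu(\bx)$ is locally Lipschitz and $\bbf$ is locally Lipschitz, the closed loop $\dot{\bx}=\bbf(\bx,\bu(\bx))$ has a unique solution on $I(\bx(0))=[0,\tau)$. Fix $\bx(0)\in\mathrm{Int}(S)$, and suppose $\bx(t)$ leaves $\mathrm{Int}(S)$ at some finite time. By continuity of $h(\bx(t))$ there is a first time $t^*\in I(\bx(0))$ with $h(\bx(t^*))=0$, and $h(\bx(t))>0$ for all $t\in[0,t^*)$. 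On this interval $B(\bx(t))$ is well defined and continuously differentiable in $t$.

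The key step is to derive the scalar comparison. On $[0,t^*)$ the lower bound in \eqref{eq:rCBF} gives $\alpha_1(h)\ge 1/B$. I plan to bound $\alpha_3(h(\bx))$ in terms of $B$, using the standard choice in the literature (e.g.\ \cite{ames2016control}) where $\alpha_3$ is evaluated so that $\alpha_3(h)\le \alpha_3(\alpha_1^{-1}(1/B))$. Hence
\begin{equation*}
\frac{{\rm d}}{{\rm d}t}B(\bx(t)) \le \alpha_3\!\left(\alpha_1^{-1}\!\left(\tfrac{1}{B(\bx(t))}\right)\right)=:\beta(B(\bx(t))).
\end{equation*}
Here $\beta$ is a decreasing function of $B$ on $(0,\infty)$, and it is bounded above by $\beta(\inf B)$ (using $\inf_{\mathrm{Int}(S)}B>0$). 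Even more simply, $\alpha_3(h(\bx(t)))\le \alpha_3(\bar h)$, where $\bar h$ bounds $h$ along the compact trajectory segment $\{\bx(t):t\in[0,t^*]\}$. This segment is compact because the solution is continuous up to $t^*<\tau$. The crude bound gives $B(\bx(t))\le B(\bx(0))+\alpha_3(\bar h)\,t^*<\infty$ for all $t\in[0,t^*)$.

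To reach the contradiction, note that as $t\to t^*$ we have $\bx(t)\to\bx(t^*)\in\partial S$. The upper bound $B\ge 1/\alpha_1(h)$ with $\alpha_1(h(\bx(t)))\to\alpha_1(0)=0^+$ then forces $B(\bx(t))\to\infty$, which contradicts the uniform bound from the previous step. Therefore no such $t^*$ exists, and $\bx(t)\in\mathrm{Int}(S)$ for all $t\in I(\bx(0))$, which is forward invariance.

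The main obstacle is the comparison step. I need to make sure the bound on $\dot B$ is integrable up to $t^*$, and that the exit time $t^*$ lies strictly inside the existence interval, so that the trajectory segment is compact and $h$ is bounded there. If the trajectory instead escaped to infinity before reaching $\partial S$, then $\tau$ would be finite but $\bx(t)$ would stay in $\mathrm{Int}(S)$ on $I(\bx(0))$, so the claim still holds. Hence the case split is only between reaching $\partial S$ at some $t^*<\tau$ and never reaching it, and only the first case needs the contradiction argument above.
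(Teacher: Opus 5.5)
The paper gives no proof of this statement. It is quoted as background from the reciprocal-CBF literature \cite{ames2016control}, so your argument has to stand on its own. Its load-bearing part does: it is a correct proof by a different route from the standard one. However, the displayed ``key step'' is wrong and should be deleted or repaired.

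From $B\ge 1/\alpha_1(h)$ you get $h\ge\alpha_1^{-1}(1/B)$, hence $\alpha_3(h)\ge\alpha_3(\alpha_1^{-1}(1/B))$. That is a \emph{lower} bound and says nothing about $\dot B$ from above. The correct autonomous bound uses the other half of the sandwich: $B\le 1/\alpha_2(h)\Rightarrow h\le\alpha_2^{-1}(1/B)\Rightarrow \dot B\le\alpha_3(\alpha_2^{-1}(1/B))$. Relatedly, $B\ge 1/\alpha_1(h)$ is the \emph{lower} bound on $B$, not the ``upper bound'' you call it in the contradiction paragraph.

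Your contradiction, however, only uses the crude bound, and that argument is complete. On the compact arc $\{\bx(t):t\in[0,t^*]\}$ one has $h\le\bar h$, so $B(\bx(t))\le B(\bx(0))+\alpha_3(\bar h)\,t^*$ on $[0,t^*)$. Meanwhile $B(\bx(t))\ge 1/\alpha_1(h(\bx(t)))\to\infty$ as $t\to t^*$, by continuity of $\alpha_1$ at $0$.

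The standard proof works differently. It feeds $\dot B\le\alpha_3\circ\alpha_2^{-1}(1/B)$ into the comparison lemma against $\dot y=\alpha_3\circ\alpha_2^{-1}(1/y)$, whose solution stays finite for all time. This yields $h(\bx(t))\ge\alpha_1^{-1}(1/y(t))>0$ on all of $I(\bx(0))$.

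Your route is more elementary. It needs neither $\alpha_2$, nor the comparison lemma, nor the auxiliary scalar-ODE lemma; it uses only the blow-up of $B$ at $\partial S$ and boundedness of $h$ on a compact arc. What it gives up is quantitativeness and some generality:
\begin{itemize}
\item Your bound depends on $\bar h$, i.e.\ on the trajectory itself. The comparison bound gives an explicit margin from $\partial S$ depending only on $B(\bx(0))$ and $t$.
\item As written, your first-hitting-time argument presupposes that the closed loop is defined up to and across $\partial S$, so that $t^*<\tau$. Reciprocal-CBF controllers are typically defined only on $\mathrm{Int}(S)$, since they involve $\nabla B$. The comparison argument handles that case directly by bounding $B$ along the entire maximal solution.
\end{itemize}
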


Note that a `canonical' reciprocal CBF can be defined as
\begin{equation}
B(\bx) = \frac{1}{h(\bx)},
\end{equation}
in which case the first line of \eqref{eq:rCBF} trivially holds.

%%%%%%%%%%%%%%%%%%%%%%%%%%%%%%%%%%%%%%%%%%%%%%%%%%%%%%%%%%%%%%
\subsection{High-order and Exponential Control Barrier Functions}
%%%%%%%%%%%%%%%%%%%%%%%%%%%%%%%%%%%%%%%%%%%%%%%%%%%%%%%%%%%%%%

In many applications it occurs that the derivative in \eqref{eq:zCBF}, i.e.,
\begin{equation}
\dot{h}(\bx,\bu) = \nabla h(\bx) \dot{\bx}
= \nabla h(\bx) \bbf(\bx,\bu),
\end{equation}
does not depend on the control input $\bu$, i.e., we have $\dot{h}(\bx)$, rendering the set \eqref{eq:zCBFset} empty and making safety-critical control design infeasible. 
In control affine systems with right hand side ${\bbf(\bx,\bu) = \bbf_{\rm a}(\bx)+\bg_{\rm a}(\bx)\bu}$, this boils down to ${\nabla h(\bx)\bg_{\rm a}(\bx) \equiv \mathbf{0}}$, which is often referred as a relative degree 2 (or higher) system.

This issue can be resolved by defining the high-order CBF \cite{xiao2019control}:
\begin{equation}\label{eq:h2}
h_2(\bx) := \dot{h}(\bx) + \alpha(h(\bx)).
\end{equation}
Then one can ensure ${h_2(\bx)\geq0}$, which holds on the set  
\begin{equation} \label{eq.SafeSet2}
S_2 = \{ \bx \in\mathbb{R}^n : \dot{h}(\bx) + \alpha(h(\bx)) \geq 0 \}, 
\end{equation}
by requiring
\begin{equation}\label{eq:reldeg2CBF}
\dot{h}_2(\bx,\bu) \geq -\alpha_2(h_2(\bx)),
\end{equation}
where $\alpha_2$ is an extended class-$\mathcal{K}$ function.
Then, it can be proven \cite{xiao2019control} that any locally Lipschitz continuous controller ${\bu(\bx)}$ satisfying \eqref{eq:reldeg2CBF} renders the set ${S \cap S_2}$ forward invariant.
Note that here the extended class-$\mathcal{K}$ function $\alpha$ needs to be differentiable.
Moreover, the above method can be extended to higher relative degrees, by differentiating the CBFs until the inputs appear in the derivative.

A special case of high-order CBF is called exponential CBF \cite{ames2019control,nguyen2016exponential}, which is obtained by considering linear functions, that is, ${\alpha(r):=\gamma_1 r}$ and ${\alpha_2(r):=\gamma_2 r}$.
Then, \eqref{eq:h2} and \eqref{eq:reldeg2CBF} lead to
\begin{equation}\label{eq:h2_exp}
h_2(\bx) := \dot{h}(\bx) + \gamma_1 h(\bx),
\end{equation}
and
\begin{equation}\label{eq:reldeg2CBF_exp}
\dot{h}_2(\bx,\bu) \geq -\gamma_2 h_2(\bx),
\end{equation}
yielding
\begin{equation}\label{eq:reldeg2CBF_exp2}
\ddot{h} + (\gamma_1 + \gamma_2) \dot{h} + \gamma_1 \gamma_2 h \geq 0, 
\end{equation}
where we dropped the arguments for simplicity.
One may observe that \eqref{eq:reldeg2CBF_exp2} is a second-order overdamped dynamical system for $h$ with negative characteristic roots $-\gamma_1$ and $-\gamma_2$.

% Exponential CBFs make it possible to extend traditional safety control to problems with relative degree greater than 1. 
% Consider, for example, the dynamic system in \ref{eq.ControlAffineSys}. 
% Suppose that this system's safe set is the superlevel set of some function $h(\bx)$ . 
% Moreover, suppose that the one seeks to ensure safety by enforcing the constraint: 
% \begin{equation}
%     \dot{h} = \frac{{\rm d}h}{{\rm d}x}f(x,u) \geq -\gamma_1 h(\bx), 
% \end{equation}
% \noindent for some user-defined constant $\gamma_1$. Finally, suppose that the system's input does not appear in the above inequality. Then one can define a new function, $h_2$, as follows: 
% \begin{equation}
%     h_2 = \frac{{\rm d}h}{{\rm d}x}f(x,u) + \gamma_1 h(\bx)
% \end{equation}
% One can now enforce the new safety constraint: 
% \begin{equation}
%     \dot{h}_2 \geq -\gamma_2 h_2, 
% \end{equation}
% \noindent for a new user-defined constant, $\gamma_2$. If the system's safety control problem has a relative degree of 2, then the control input will appear in the above inequality. 
% A controller implementing this inequality is then known as an \textit{exponential} safety controller. 

It is important to emphasize that the corresponding safety-critical controller guarantees the forward invariance of ${S \cap S_2}$, not $S$. 
As shown in Section~\ref{sec:examples}, for mechanical system interpretations, this amounts to guaranteeing the forward invariance of a speed-dependent safe distance constraint.

%Consider the nonlinear dynamical system \eqref{eq.ControlAffineSys} with initial condition $x_0$ and the safe set $C\subset \mathbb{R}^n$ \eqref{eq.SafeSet} of an r-times continuously differentiable function $h$. Define the traverse dynamics with traverse vector $\eta(x)$, 
%\[
%\eta (x) = \left\lceil\begin{matrix}
%    h(\bx)\\
%    L_fh(\bx)\\
%    L_f^2h(\bx)\\
%\vdots\\
%    L_f^{r-1}h(\bx)
%\end{matrix}
%\right\rceil.
%\]
%\[
%\dot{\eta}%(x)=F\eta(x)+Gu,~h(\bx)=C\eta(x),
%\]
%where
%\[F = \left\lceil\begin{matrix}
%    0&1&0&\dotsi&0\\
%    0&0&1&\dotsi&0\\
    %\vdots&\vdots&\vdots&~&\vdots\\
    %0&0&0&\dotsi&1\\
    %0&0&0&\dotsi&0\\
%\end{matrix}
%\right\rceil
%\]
%\[
%G = \left\lceil\begin{matrix}
%    0\\
%    0\\
%    \vdots\\
%    0\\
%    1\\
%\end{matrix}
%\right\rceil
%\]
%\[
%C = \left\lceil\begin{matrix}
%    1&0&\dots&0
%\end{matrix}
%\right\rceil
%\]
%\begin{definition}
%A continuously differentiable function $h$ is an exponential CBF for the dynamical system in \eqref{eq.ControlAffineSys} if there exists a row vector $K_b \in \mathbb{R}^r$ s.t.
%\[
%\sup_{u\in U} [L_f^r h(\bx) + L_gL_f^{r-1} h(\bx) u ]\geq -K \eta (x), 
%\]
%$\forall \bx \in\mathrm{Int}(C)$. This results in $h(\bx)\geq Ce^{(F-GK)t}\eta(x_0)\geq0$ whenever $h(x_0)\geq0$. 
    
%\end{definition}

%%%%%%%%%%%%%%%%%%%%%%%%%%%%%%%%%%%%%%%%%%%%%%%%%%%%%%%%%%%%%%
\section{Motivating examples}\label{sec:examples}
%%%%%%%%%%%%%%%%%%%%%%%%%%%%%%%%%%%%%%%%%%%%%%%%%%%%%%%%%%%%%%

This section motivates the paper's graceful safety control approach by presenting two simulation examples highlighting some of the limitations of the above traditional approaches. 
Both examples focus on a simple wall collision avoidance problem, but the intent is to motivate a mathematical framework broadly applicable to other systems. 
Fig.~\ref{fig.object} presents the basic setting for this problem.
An object (e.g., a point mass) with location $x(t)$ translates in a 1-dimensional space, and must avoid collision with a wall at some location $x_\mathrm{w}$. 
A baseline controller attempts to move the object towards a desired location $x_\mathrm{d}$ on the other side of the wall, but this location is infeasible because of the above collision avoidance need. 
In both simulation examples, the primary definition of safety is that the object should ideally be located at a safe distance from the wall, i.e., ${x(t)\geq x_\mathrm{sf}}$, where ${x_\mathrm{sf}>x_\mathrm{w}}$. 
The main difference between the two examples is the choice of mathematical model, and the impact of this choice on relative degree. 
The first example assumes that the velocity $\dot{x}(t)$ of the object can be actuated directly, leading to a first-order differential equation and a relative degree of 1. 
In contrast, the second example assumes that the acceleration $\ddot{x}(t)$ of the object is the control input, leading to a second order differential equation and a relative degree of 2. 

%%%%%%%%%%%%%%%%%%%%%%%%%%%%%%%%%%%%%%%%%%%%%%%%%%%%
\begin{figure}
\begin{center}
\includegraphics[width=0.45\textwidth]{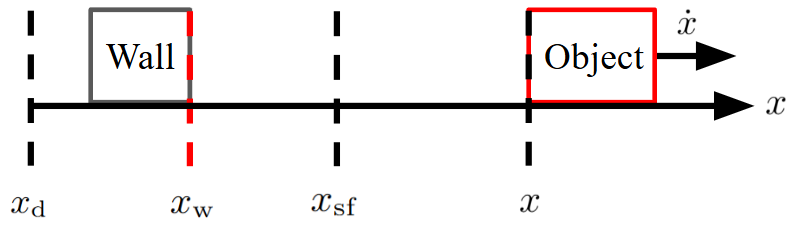}  % The printed column 
\caption{Mechanical model used for the illustrative examples in this paper.} % width is 8.4 cm.
\label{fig.object}                 % Size the figures 
\end{center}                 % accordingly.
\end{figure}
%%%%%%%%%%%%%%%%%%%%%%%%%%%%%%%%%%%%%%%%%%%%%%%%%%%%

%%%%%%%%%%%%%%%%%%%%%%%%%%%%%%%%%%%%%%%%%%%%%%%%%%%%%%%%%%%%%%
\subsection{Example \#1: a first-order zeroing CBF}
%%%%%%%%%%%%%%%%%%%%%%%%%%%%%%%%%%%%%%%%%%%%%%%%%%%%%%%%%%%%%%

Suppose that the velocity of the above object can be manipulated directly, as a control input. 
This corresponds to the first-order differential equation
\begin{equation}\label{eq:1storderODE}
\dot{x} = u.
\end{equation}
For simplicity, omit the input bounds, that is, assume ${u\in\mathbb{R}}$. 
Also, consider the baseline control policy 
\begin{equation} \label{eq.1stOrderBaselineDesiredxdot}
    u_\mathrm{d} = -k(x-x_\mathrm{d}),
\end{equation}
whose goal is to bring the object to the desired target location $x_\mathrm{d}$ using the control gain $k$. 
In the absence of safety-critical control, this will cause a collision between the object and the wall. 
To prevent this, consider the safe set: 
\begin{equation}
 S = \{ x \in\mathbb{R}:x \geq x_\mathrm{sf} \}, 
%    x \geq x_\mathrm{sf},
\end{equation}
and construct the simple CBF candidate:
\begin{equation} 
\label{eq.1stCBF_h}
    h(x) = x-x_\mathrm{sf},
\end{equation}
where the positivity of $h$ implies the object's safety. 
Substituting this into \eqref{eq:reldeg1CBF_exp} yields
\begin{equation} \label{eq.1stOrderBaselineSafexdot}
    u \geq -\gamma(x-x_\mathrm{sf}) =: u_\mathrm{sf},
\end{equation}
where $\gamma$ is a user-defined constant. 

Finally, we construct an optimization problem, where the goal is to minimize the difference between the desired and actual control inputs, subject to the CBF-based safety constraint: 
\begin{equation}\label{eq.1stOrderBenchmark_Optimization}
\begin{split}
&\min_{u\in\mathbb{R}} \frac{1}{2}(u-u_\mathrm{d})^2, 
\\
&{\rm s.t.} \quad u \geq u_\mathrm{sf},
\end{split}
\end{equation}
where $u_\mathrm{d}$ and $u_\mathrm{sf}$ are defined in \eqref{eq.1stOrderBaselineDesiredxdot} and \eqref{eq.1stOrderBaselineSafexdot}. 
Applying the KKT conditions, this quadratic problem (QP) has the analytical solution
\begin{equation}\label{eq.fromKKT1}
\begin{split}
u^* &= \max\{u_\mathrm{d},u_\mathrm{sf}\} 
\\
&= \max\{-k(x-x_\mathrm{d}),-\gamma(x-x_\mathrm{sf})\}.
\end{split}
\end{equation}
Putting this into \eqref{eq:1storderODE} gives the closed-loop dynamics.

%%%%%%%%%%%%%%%%%%%%%%%%%%%%%%%%%%%%%%%%%%%%%%%%%%%%%
\begin{table}[!t]
\centering
    \caption{List of Parameters used in Example \#1} \vspace{4pt} 
 \label{tab:parameters}
    \begin{tabular}{|c|c|c|}
    \hline
    Symbol & Definition & Value 
    \\
    \hline
    $x_\mathrm{d}$ & Desired target location & $0$ ${\rm m}$ 
    \\
    \hline
    $x_\mathrm{w}$ & Wall location & $1$ ${\rm m}$ 
    \\
    \hline
    $x_\mathrm{sf}$ & Safe distance & $3$ ${\rm m}$ 
    \\
    \hline
    $k$ & Baseline control policy gain & $0.5$ ${\rm s}^{-1}$ 
    \\  
    \hline
    $\gamma$ & CBF gain & $3$ ${\rm s}^{-1}$ 
    \\
    \hline
    \end{tabular} 
\end{table}
%%%%%%%%%%%%%%%%%%%%%%%%%%%%%%%%%%%%%%%%%%%%%%%%%%%%%

The above example problem is simulated in MATLAB using the ``ode15s" function, with an 8-second time horizon, a time step of 1 ms, and the parameter values in Table~\ref{tab:parameters}.
Different initial positions of ${x(0)=2}$ m, 5 m, 7 m, and 10 m are used to place the object within the unsafe or safe sets.

Figure~\ref{fig.1stOrderBenchmark1} plots the object's position and velocity for this example while Fig.~\ref{fig.1stOrderBenchmark2} shows the value of the barrier function \eqref{eq.1stCBF_h} and the trajectory in phase space. 
When the object is placed at ${x(0)=2\ {\rm m}}$, due to its unsafe initial state, the controller applies a positive velocity. As a result, the object approaches $x_\mathrm{sf}$, and the barrier function converges exponentially to 0. Second, for ${x(0) = 5\ {\rm m}}$, ${x(0) = 7\ {\rm m}}$ and ${x(0)=10\ {\rm m}}$, the controller first applies the desired velocity from  \eqref{eq.1stOrderBaselineDesiredxdot} to move the object toward the desired position $x_\mathrm{d}$. As the object moves closer to the safety boundary at $x_\mathrm{sf}$, the controller switches to the safe input profile \eqref{eq.1stOrderBaselineSafexdot} to prevent collision. 
The resulting phase space plot is a piecewise linear trajectory with a slope of $-k$ for the desired velocity control and $-\gamma$ for the safe velocity control. 

%%%%%%%%%%%%%%%%%%%%%%%%%%%%%%%%%%%%%%%%%%%%%%%%%%%%% 
\begin{figure} [t]
\begin{center}
\includegraphics[width=0.45\textwidth]{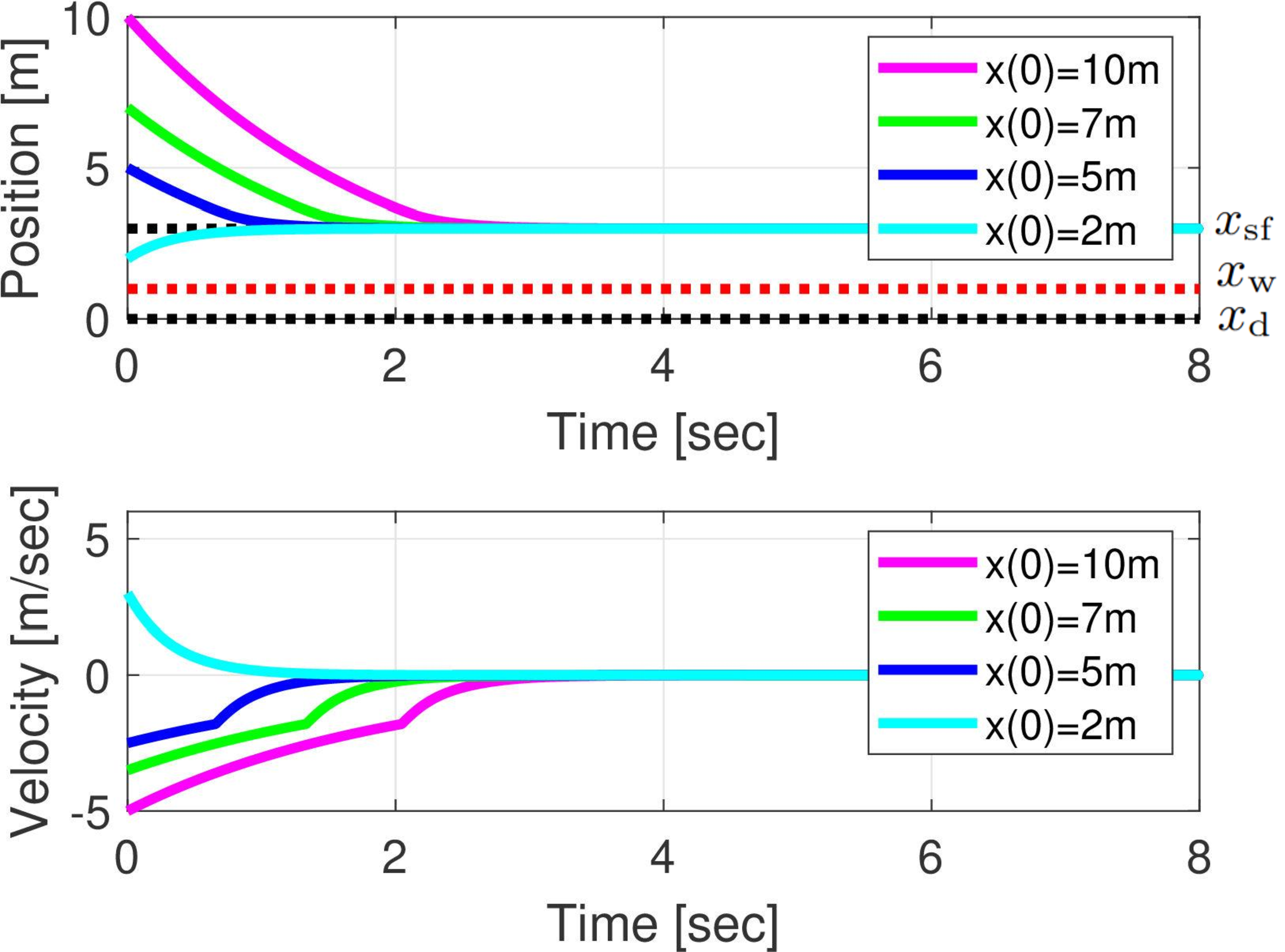}  % The printed column 
\caption{Position \& velocity for baseline zeroing CBF} % width is 8.4 cm.
\label{fig.1stOrderBenchmark1}                 % Size the figures 
\end{center}                 % accordingly.
\end{figure}
%%%%%%%%%%%%%%%%%%%%%%%%%%%%%%%%%%%%%%%%%%%%%%%%%%%%%

%%%%%%%%%%%%%%%%%%%%%%%%%%%%%%%%%%%%%%%%%%%%%%%%%%%%%
\begin{figure} 
\begin{center}
\includegraphics[width=0.45\textwidth]{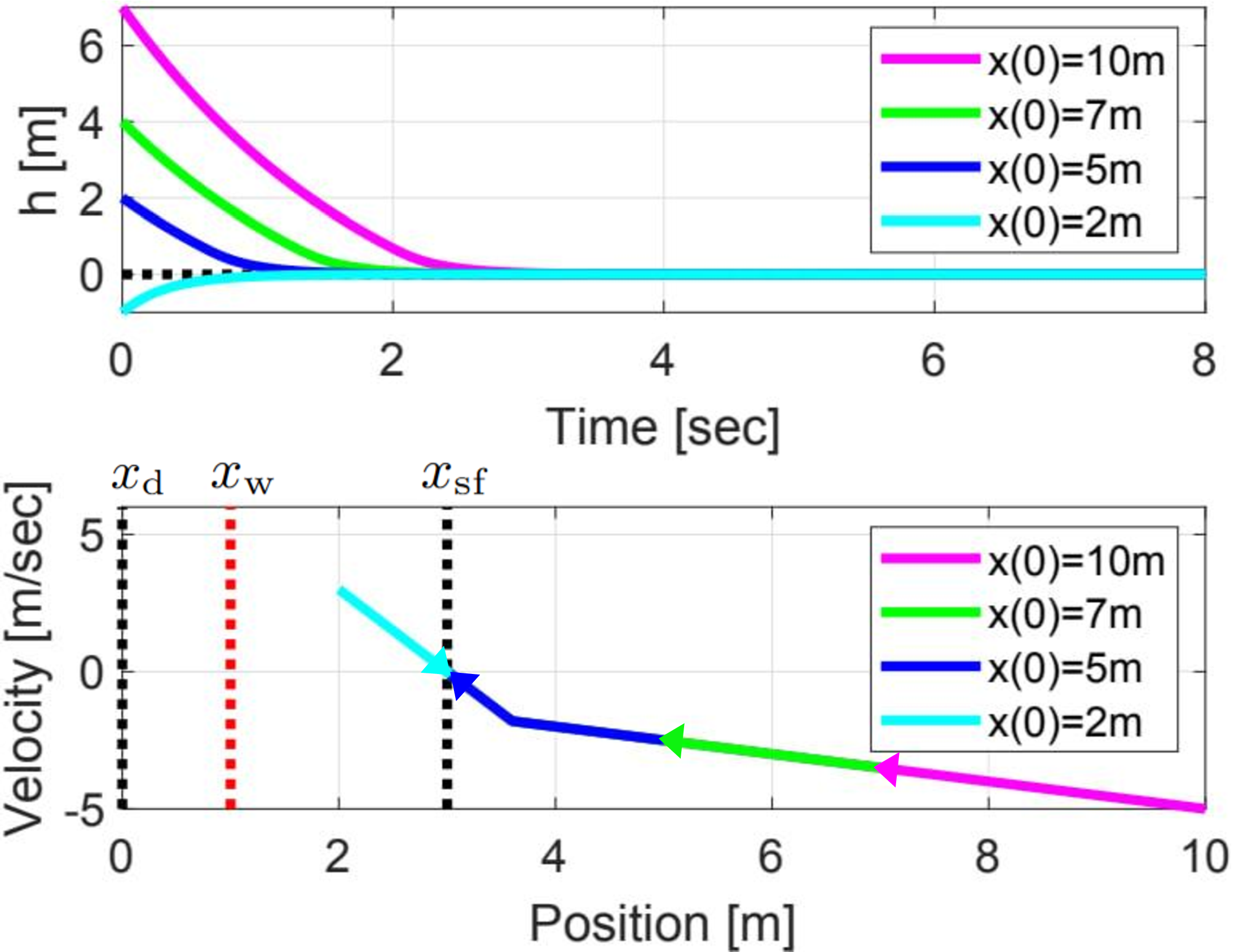}  % The printed column 
\caption{Value of the barrier function \& velocity vs. position plot for baseline zeroing CBF} % width is 8.4 cm.
\label{fig.1stOrderBenchmark2}                 % Size the figures 
\end{center}                 % accordingly.
\end{figure}
%%%%%%%%%%%%%%%%%%%%%%%%%%%%%%%%%%%%%%%%%%%%%%%%%%%%%

The above simulation results illustrate some of the key strengths of traditional zeroing CBFs. 
For example, the zeroing CBF succeeds in maintaining system safety as long as the initial object position is safe. 
Moreover, the zeroing CBF succeeds in bringing the object exponentially back towards safety if the initial condition is unsafe. 
This simulation study also highlights a well-recognized limitation of zeroing CBF methods. 
In particular, the fact that the safety control problem statement involves directly actuating the object's velocity, which is unrealistic in practical mechanical systems, where the actuation commands are typically forces and/or accelerations. 
This motivates the next simulation example, where acceleration is treated as the control input variable, leading to a relative degree of 2. 

%Although the object never collides with the wall in this simulation, the phase plane plot depicts a critical limitation of the benchmark approach -- its lack of "grace." The zeroing CBF creates the forward invariance guarantee around the safe distance requirement, $x_\mathrm{sf}$. However, this does not necessarily guarantee the avoidance of collision. Furthermore, this approach does not enable the controller to provide a stronger control input if the primary safe set breach is bigger.

%%%%%%%%%%%%%%%%%%%%%%%%%%%%%%%%%%%%%%%%%%%%%%%%%%%%%%%%%%%%%%
\subsection{Example \#2: a second-order exponential CBF}
%%%%%%%%%%%%%%%%%%%%%%%%%%%%%%%%%%%%%%%%%%%%%%%%%%%%%%%%%%%%%%

Suppose that the control input in the wall collision avoidance problem is acceleration, as opposed to velocity. 
This corresponds to the second-order differential equation
\begin{equation}\label{eq:2ndorderODE}
\ddot{x} = u,
\end{equation}
% or equivalently
% \begin{equation}\label{eq:2ndorderODE2}
% \begin{split}
% \dot{x} = v,
% \\
% \dot{v} = u.
% \end{split}
% \end{equation}
and indeed this system has two states, that is, ${\bx = ( x, \dot{x} )}$.
Again, for simplicity, omit the input bounds and assume ${u\in\mathbb{R}}$. 
Furthermore, consider the new baseline controller:  
\begin{equation} \label{eq.DesiredAcceleration} 
    u_\mathrm{d} = -k_1(x-x_\mathrm{d}) -k_2\dot{x} ,
%    \ddot{x}_\mathrm{d} = -k_1\dot{x}-k_2(x-x_\mathrm{d}),
\end{equation} 
where $k_1$ and $k_2$ are user-selected position and velocity feedback terms, respectively. 
Now consider the problem of ensuring the safety of the above object while utilizing the same barrier function \eqref{eq.1stCBF_h} as in the previous example.
Due to having a relative degree of 2 we build a high-order, specifically an exponential CBF \eqref{eq:reldeg2CBF_exp}
which results in
\begin{equation}\label{eq:h2_exp_example}
h_2(x,\dot{x}) = \dot{x} + \gamma_1 (x-x_\mathrm{sf}),
\end{equation}
yielding the set
\begin{equation}
 S_2 = \{ (x,\dot{x}) \in \mathbb{R}^2:\dot{x} \geq -\gamma_1 (x-x_\mathrm{sf}) \}.
\end{equation}
The safety constraint \eqref{eq:reldeg2CBF_exp}, \eqref{eq:reldeg2CBF_exp2} becomes
\begin{equation}\label{eq:reldeg2CBF_exp2_example}
u \geq  - \gamma_1 \gamma_2 (x - x_{\rm sf}) - (\gamma_1 + \gamma_2) \dot{x} =: u_\mathrm{sf},
\end{equation}
which ensures the forward invariance of the set
\begin{equation}\label{eq:invset_exp}
 S \cap S_2 = \{ (x,\dot{x}) \in \mathbb{R}^2: x \geq x_\mathrm{sf}, \dot{x} \geq -\gamma_1 (x-x_\mathrm{sf}) \}.
\end{equation}

In this case the QP \eqref{eq.1stOrderBenchmark_Optimization} can be formulated the same way,
keeping in mind that the control input $u$ now represents acceleration rather than velocity.
Now, solving the KKT condition yields the controller
\begin{equation}\label{eq.fromKKT2}
\begin{split}
u^* &= \max\{u_\mathrm{d},u_\mathrm{sf}\} 
\\
&= \max\{-k_1(x-x_\mathrm{d})-k_2\dot{x}, 
\\
&\qquad\qquad - \gamma_1 \gamma_2 (x - x_{\rm sf}) - (\gamma_1 + \gamma_2) \dot{x}\},
\end{split}
\end{equation}
with $u_\mathrm{d}$ and $u_\mathrm{sf}$ defined in \eqref{eq.DesiredAcceleration} and \eqref{eq:reldeg2CBF_exp2_example}. 
Substituting this into \eqref{eq:2ndorderODE} leads to the closed-loop system.

As in the previous example, the above safety controller is simulated in MATLAB using the ``ode15s" function. 
The same 8-second time horizon with a time step of 1 ms and the same parameter values for $x_\mathrm{d}$, $x_\mathrm{w}$, and $x_\mathrm{sf}$ are implemented as before, with additional parameter values listed in Table~\ref{tab:exp_parameters2}. 
The controller is simulated using four different initial position values of $2\ {\rm m}$, $5\ {\rm m}$, $7\ {\rm m}$, and $10\ {\rm m}$ while the initial object velocity is set as $-25\ {\rm m/s}$.
Now, these initial positions represent an extremely unsafe distance, two slightly unsafe distances, and a safe distance, respectively.

%%%%%%%%%%%%%%%%%%%%%%%%%%%%%%%%%%%%%%%%%%%%%%%%%%%%%
\begin{table}
\centering
    \caption{Additional Parameters used in Example \#2} \vspace{4pt} 
 \label{tab:exp_parameters2}
    \begin{tabular}{|c|c|c|}
    \hline
    Symbol & Definition & Value 
    \\
    \hline
    $k_1$ & Baseline proportional gain & $1$ ${\rm s}^{-2}$ 
    \\
    \hline
    $k_2$ & Baseline differential gain & $2$ ${\rm s}^{-1}$ 
    \\
    \hline
    $\gamma_1$ & 1st-order CBF gain & $4.5$ ${\rm s}^{-1}$ 
    \\
    \hline
    $\gamma_2$ & 2nd-order CBF gain & $0.5$ ${\rm s}^{-1}$ 
    \\
    \hline
    \end{tabular} 
\end{table}
%%%%%%%%%%%%%%%%%%%%%%%%%%%%%%%%%%%%%%%%%%%%%%%%%%%%%

Figure~\ref{fig.2ndOrder_eCBF1} plots the object's position, velocity, and acceleration over time, while Fig.~\ref{fig.2ndOrder_eCBF2} shows the value of the barrier functions and the phase plane plot. 
In the phase plane plot, the orange dotted line represents  ${h_2(x,\dot{x})=0}$, that is, ${\dot{x} = - \gamma_1 (x-x_\mathrm{sf})}$, cf.~\eqref{eq:h2_exp_example}. 
Lastly, once the object collides with the wall, the remainder of the corresponding simulation is plotted as a gray dashed curve instead of a solid colored curve. 
As seen in these figures, when ${x(0)=2\ {\rm m}}$ and ${x(0)=5\ {\rm m}}$, the controller fails to prevent wall collisions. 

%%%%%%%%%%%%%%%%%%%%%%%%%%%%%%%%%%%%%%%%%%%%%%%%%%%%%
\begin{figure}
\begin{center}
\includegraphics[width=0.59\textwidth, angle=270]{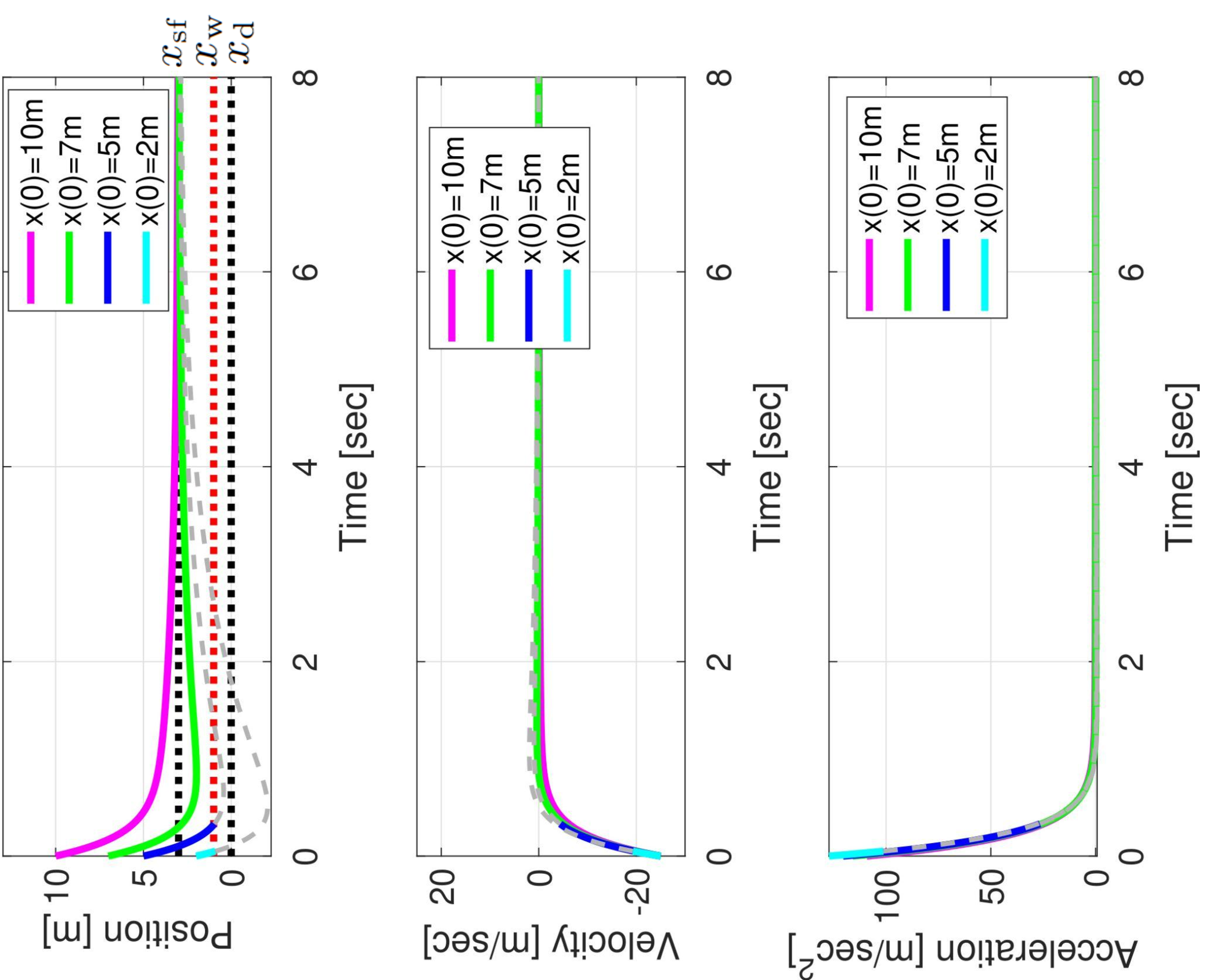}  % The printed column 
\caption{State \& input trajectories for the controller based on the exponential CBF} % width is 8.4 cm.
\label{fig.2ndOrder_eCBF1}                 % Size the figures 
\end{center}                 % accordingly.
\end{figure}
%%%%%%%%%%%%%%%%%%%%%%%%%%%%%%%%%%%%%%%%%%%%%%%%%%%%%

%%%%%%%%%%%%%%%%%%%%%%%%%%%%%%%%%%%%%%%%%%%%%%%%%%%%%
\begin{figure}
\begin{center}
\includegraphics[width=0.59\textwidth, angle=270]{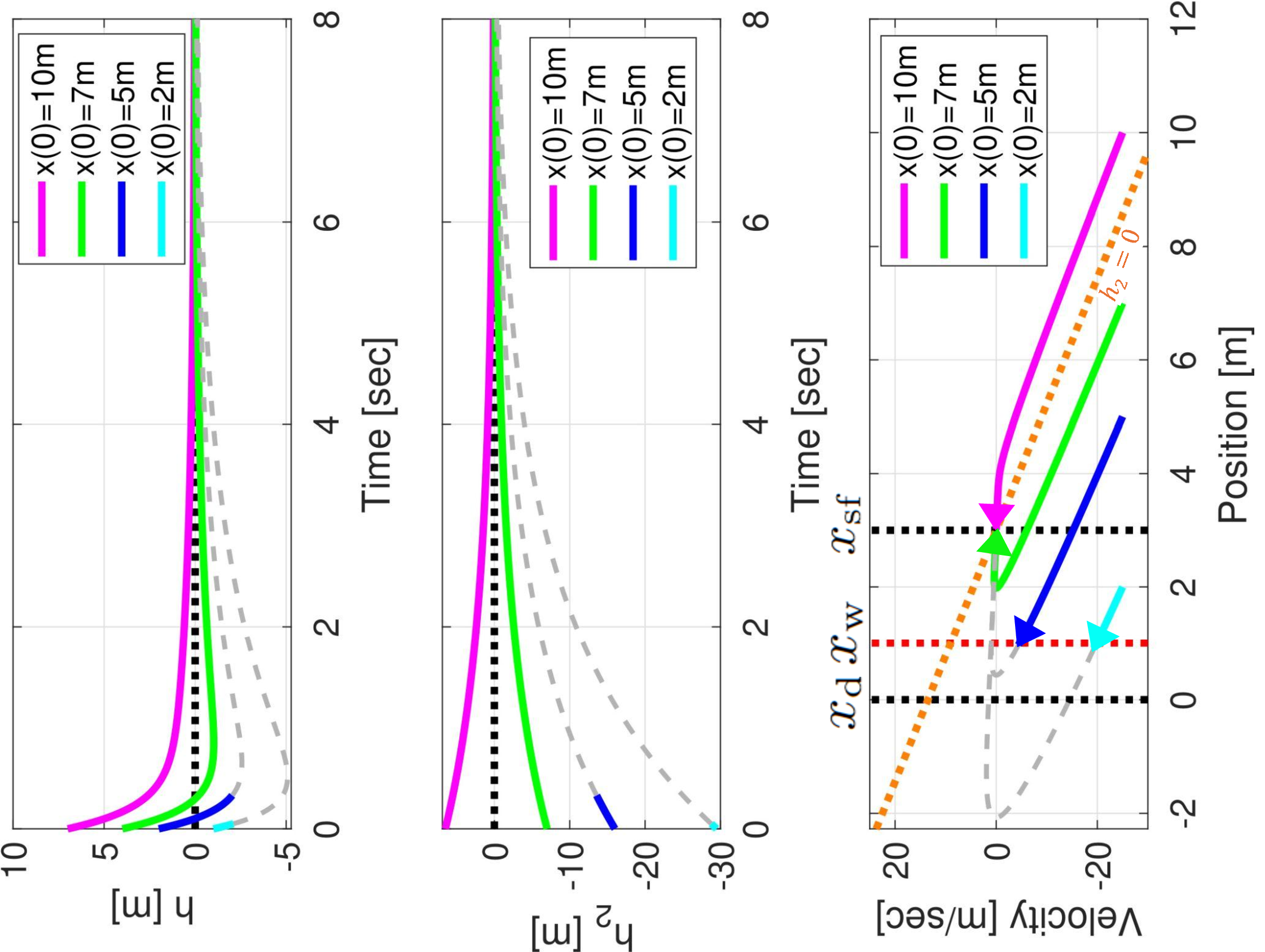}  % The printed column 
\caption{Barrier \& phase plot for the controller based on the exponential CBF} % width is 8.4 cm.
\label{fig.2ndOrder_eCBF2}                 % Size the figures 
\end{center}                 % accordingly.
\end{figure}
%%%%%%%%%%%%%%%%%%%%%%%%%%%%%%%%%%%%%%%%%%%%%%%%%%%%%

The above collisions illustrate one of the main weaknesses of traditional exponential safety barriers. 
This is not a consequence of actuation constraints. 
In fact, while such constraints are relevant in practical applications, they are deliberately left out of this section's examples for illustrative reasons. 
Rather, the culprit for this failure to prevent collisions is more fundamental. 
From a mathematical perspective, the above exponential safety controller only guarantees the forward invariance of the set ${S \cap S_2}$ given in \eqref{eq:invset_exp}. 
Intuitively, this translates to a position- and velocity-dependent definition of what this paper describes as the ``primary" safety boundary. 
When the initial velocity of the object is too large in magnitude and negative, even if the initial position of the object is ``safe" (based on the definition of $h(\bx)$), the ``primary" safety boundary is breached, and the exponential safety controller is no longer guaranteed to prevent collisions with the wall. 
This motivates the focus of this paper on adding one more, ``failsafe" safety boundary. 
The core idea behind graceful safety control, as described in the next section, is to ensure that this failsafe boundary continues to be honored at all moments in time, even if the primary safety boundary is breached.

%%%%%%%%%%%%%%%%%%%%%%%%%%%%%%%%%%%%%%%%%%%%%%%%%%%%%%%%%%%%%%
\section{Graceful safety control}\label{sec:grace}
%%%%%%%%%%%%%%%%%%%%%%%%%%%%%%%%%%%%%%%%%%%%%%%%%%%%%%%%%%%%%%

This section proposes a first- and second-order graceful safety control policies. 
We begin by presenting the first-order policy and proving its ability to achieve graceful control. 
Then we extend this policy to second-order, which allows the application of this framework to systems of relative degree 2.

%%%%%%%%%%%%%%%%%%%%%%%%%%%%%%%%%%%%%%%%%%%%%%%%%%%%%%%%%%%%%%
 \subsection{First-order graceful safety control framework}
 %%%%%%%%%%%%%%%%%%%%%%%%%%%%%%%%%%%%%%%%%%%%%%%%%%%%%%%%%%%%%%

Consider a dynamic system governed by \eqref{eq.ControlAffineSys}. 
Suppose that this system has a relative degree of 1. 
Moreover, suppose that there exists a continuously differentiable function ${H:\mathbb{R}^n\rightarrow\mathbb{R}}$ that can be used for defining system safety. 
In particular, suppose that there exist two boundaries associated with $H(\bx)$: a primary boundary at ${H(\bx)=b}$ below which the system is in ``danger", and a secondary boundary at ${H(\bx)=a}$, where ${a<b}$, below which the system experiences a safety ``catastrophe". 
The specific values of the safety thresholds, $a$ and $b$, are problem-dependent. 
However, one can construct an affine transformation to shift these boundaries as 
\begin{equation}\label{eq:h_g}
    h_g(\bx) = \frac{H(\bx)-b}{b-a}.
\end{equation}
Based on the above transformation, a given system is ``safe" when ${h_g(\bx) \geq 0}$, in ``danger" when ${-1<h_g(\bx)<0}$, and ``catastrophic" when ${h_g(\bx) \leq -1}$. 
In other words, the above definitions of $H(\bx)$ and $h_g(\bx)$ correspond to a multi-layered definition of safety: an essential prerequisite to graceful safety control. 
This multi-layered definition is stated formally below. 
\\

\begin{definition}
Consider the system in \eqref{eq.ControlAffineSys} and the continuously differentiable function ${h_g(\bx):\mathbb{R}^n \rightarrow\mathbb{R}}$, then the system's primary safe set $S$, secondary failsafe set $D$, and catastrophically unsafe set $C$ are defined as 
\begin{equation}\label{eq.gracefulsafetyset}
\begin{split}
S &=\{\bx \in\mathbb{R}^n : h_g(\bx) \geq 0\},  
\\
D &=\{\bx \in\mathbb{R}^n : -1 < h_g(\bx) < 0 \},
\\
C &=\{\bx \in\mathbb{R}^n : h_g(\bx) \leq -1 \}.
\end{split}
\end{equation}
\end{definition}

We propose to achieve graceful safety control by imposing the constraint  
\begin{equation} \label{eq.1stgraceful}
  \dot{h}_g(\bx,\bu)  \geq -\beta\left(\frac{h_g(\bx)}{h_g(\bx)+1}\right),
\end{equation}
at every instant in time, where $\beta$ is an extended class-$\mathcal{K}$ function.
For simplicity, we will choose ${\beta(r) = \gamma r}$ in the examples detailed below.
Fig.~\ref{Fig.1stGracefulCBF} visualizes the right-hand side of the constraint \eqref{eq.1stgraceful} for ${\beta(r) = r}$. 

The constraint \eqref{eq.1stgraceful} achieves two goals simultaneously.
First, it functions as a zeroing CBF around ${h_g=0}$.
Second, it provides a stiffening response to increasing safety violations that becomes infinitely large in the limit as $h_g$ approaches $-1$. 
Together, these two properties provide a multi-layered graceful safety guarantee, as shown below. 

\textbf{Remark:} While the above general setup does not require it, it is reasonable to construct $h_g(\bx)$ such that ${\nabla h_g(\bx) < \mathbf{0}}$ for ${\bx \in \partial S \cup D \cup \partial C}$, cf.~\eqref{eq.gracefulsafetyset}.
This ensures as the system goes from the primary safety boundary $\partial S$, given by ${h_g(\bx) = 0}$, toward the boundary of catastrophe $\partial C$, given by ${h_g(\bx) = -1}$, within the failsafe set $D$, there is a ``sense of increasing danger". 

%%%%%%%%%%%%%%%%%%%%%%%%%%%%%%%%%%%%%%%%%%%%%%%%%%%%%
\begin{figure}[!t]
\begin{center}
\includegraphics[width=0.47\textwidth]{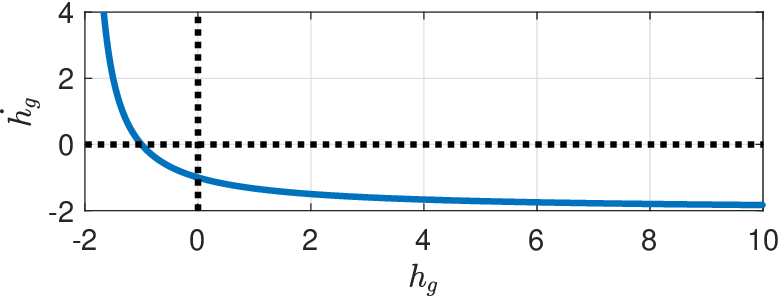}  % The printed column 
\caption{Illustrating the inequality \eqref{eq.1stgraceful} for ${\beta(r) = r}$.} % width is 8.4 cm.
\label{Fig.1stGracefulCBF}                 % Size the figures 
\end{center}                 % accordingly.
\end{figure}
%%%%%%%%%%%%%%%%%%%%%%%%%%%%%%%%%%%%%%%%%%%%%%%%%%%%%

%%%%%%%%%%%%%%%%%%%%%%%%%%%%%%%%%%%%%%%%%%%%%%%%%%%%%%%%%%%%%%
\subsection{First-order graceful safety guarantees}
%%%%%%%%%%%%%%%%%%%%%%%%%%%%%%%%%%%%%%%%%%%%%%%%%%%%%%%%%%%%%%

The graceful safety controller in \eqref{eq.1stgraceful} has three important properties, as stated in the following theorems.
\\

\theoremstyle{theorem}
\begin{theorem}
Consider a continuously differentiable function $h_g(\bx)$ constructed as in \eqref{eq:h_g} for the control system \eqref{eq.ControlAffineSys}. 
Then, any locally Lipschitz continuous controller $\bu(\bx)$ that satisfies \eqref{eq.1stgraceful} guarantees the forward invariance of $S$.  
\end{theorem}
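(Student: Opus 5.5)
The plan is to reduce this to the zeroing CBF result stated earlier (the theorem following Definition~\ref{def.zCBF}). Define on the region $\{h_g(\bx) > -1\}$ the function
\[
\alpha(r) := \beta\!\left(\frac{r}{r+1}\right), \qquad r > -1 .
\]
The map $r \mapsto r/(r+1) = 1 - 1/(r+1)$ is strictly increasing on $(-1,\infty)$ and vanishes at $r=0$. Since $\beta$ is an extended class-$\mathcal{K}$ function, $\alpha$ is therefore strictly increasing with $\alpha(0)=0$. With this choice, constraint \eqref{eq.1stgraceful} reads exactly $\dot h_g(\bx,\bu) \ge -\alpha(h_g(\bx))$, which has the form \eqref{eq:zCBFset} with $h = h_g$.

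The main obstacle is that $\alpha$ is only defined on $(-1,\infty)$, not on all of $\mathbb{R}$, so the earlier theorem cannot be quoted verbatim. There are two ways to handle this.

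\emph{Extension.} Forward invariance of $S$ only involves the behaviour of $\alpha$ near $r=0$. One can therefore replace $\alpha$ by an extended class-$\mathcal{K}$ function $\tilde\alpha$ on $\mathbb{R}$ that agrees with $\alpha$ on $[-\tfrac12,\infty)$, for instance by continuing it linearly below $-\tfrac12$. For trajectories starting in $S$, only the values of $h_g$ near and above $0$ can matter, so the constraint is unchanged there.

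\emph{Direct argument.} This is the route I would actually write out, since it is self-contained. Suppose $h_g(\bx(0)) \ge 0$ and, for contradiction, that $h_g(\bx(t_1)) < 0$ for some $t_1 \in I(\bx(0))$. Take $t_0 = \sup\{t \le t_1 : h_g(\bx(t)) \ge 0\}$. By continuity $h_g(\bx(t_0)) = 0$, and $h_g(\bx(t)) < 0$ on $(t_0,t_1]$. Shrinking $t_1$ if needed, we may also assume $h_g(\bx(t)) \in (-\tfrac12, 0)$ there. On this interval
\[
\dot h_g \ge -\beta\!\left(\frac{h_g}{h_g+1}\right) > 0 ,
\]
because $h_g/(h_g+1) < 0$ and $\beta$ is increasing with $\beta(0)=0$, so $\beta$ is negative at negative arguments. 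Hence $h_g$ is nondecreasing on $[t_0,t_1]$, which gives $h_g(\bx(t_1)) \ge h_g(\bx(t_0)) = 0$, a contradiction.

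For the boundary-of-domain issue: since $h_g$ starts at $h_g \ge 0 > -1$, the right-hand side of \eqref{eq.1stgraceful} stays finite along the trajectory on any interval where $h_g > -1$. The argument above never lets $h_g$ approach $-1$, so the singularity of the constraint is never encountered. Local Lipschitzness of $\bu(\bx)$ (together with that of $\bbf$) guarantees a unique absolutely continuous solution, so $h_g(\bx(t))$ is $C^1$ in $t$ and the differential inequality can be integrated. This yields $h_g(\bx(t)) \ge 0$ for all $t \in I(\bx(0))$, i.e., forward invariance of $S$.
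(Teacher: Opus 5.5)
Your proof is correct. Your first paragraph is essentially the paper's entire proof: the paper observes that $\beta$ and $r\mapsto r/(r+1)$ are extended class-$\mathcal{K}$ functions, so the right-hand side of \eqref{eq.1stgraceful} is $-\alpha(h_g)$ with $\alpha$ extended class-$\mathcal{K}$, and it then invokes the zeroing-CBF theorem.

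You go further in two respects. First, you correctly point out that $r\mapsto r/(r+1)$ is defined and increasing only on $(-1,\infty)$, because it has a pole at $r=-1$. So it is not an extended class-$\mathcal{K}$ function on $\mathbb{R}$, and the cited theorem cannot be applied verbatim. The paper passes over this point. Second, instead of citing that theorem, you give a self-contained last-exit-time argument. Just after the last time $t_0$ with $h_g=0$, the state lies in $\{-\tfrac12<h_g<0\}\subset D$, where the constraint forces $\dot h_g\ge 0$. Hence $h_g$ cannot become negative.

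The paper's route buys brevity and the conceptual point that the graceful constraint inherits zeroing-CBF behavior near $\partial S$. Yours buys rigor and economy of hypotheses:
\begin{itemize}
\item It uses the constraint inside $D$, just below $\partial S$. So it needs no regularity of $h_g$ on $\partial S$, and none of the comparison-lemma or Nagumo-type machinery behind the zeroing-CBF theorem.
\item It uses only the sign condition $\beta(r)\le 0$ for $r<0$. The strict inequality $\dot h_g>0$ you wrote is more than you need, and continuity or strict monotonicity of $\beta$ is never used.
\end{itemize}
Your ``extension'' alternative is also valid, provided the extended function agrees with the original on a set like $\{h_g>-\tfrac12\}$, where the constraint actually holds. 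Still, the direct argument is the cleaner of the two and the right one to write out.
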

\vspace{-12pt}
\begin{proof}
The proof follows directly from the fact that $h_g$ satisfies the definition of a zeroing CBF. 
In particular, the right-hand side of \eqref{eq.1stgraceful} is an extended class-$\mathcal{K}$ function of $h_g$, since $\beta$ and
$r/(r+1)$ are extended class-$\mathcal{K}$ functions and the composition of them is an extended class-$\mathcal{K}$ function.
\end{proof}

%%%%%%%%%%%%%%%%%%%%%%%%%%%%%%%%%%%%%%%%%%%%%%%%%%%%%
\begin{figure}[!t]
\begin{center}
\includegraphics[width=0.47\textwidth]{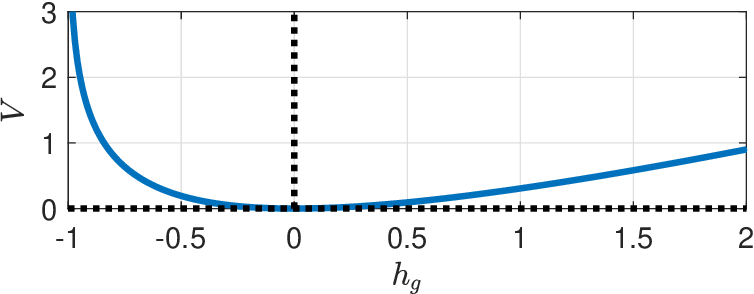}  % The printed column 
\caption{Lyapunov function candidate \eqref{eq:Lyap}} % width is 8.4 cm.
\label{Fig.1stOrderLyapunov}                 % Size the figures 
\end{center}                 % accordingly.
\end{figure}
%%%%%%%%%%%%%%%%%%%%%%%%%%%%%%%%%%%%%%%%%%%%%%%%%%%%%

\textbf{Remark:} Intuitively, the above theorem asserts that the proposed graceful controller inherits the properties of traditional zeroing CBFs around the primary safety boundary $\partial S$. 
In other words, the primary safe set $S$ is guaranteed to be forward invariant. 
\\

\begin{theorem}
Suppose that \eqref{eq.1stgraceful} holds and ${\bx(0) \in D}$,  i.e.,  ${-1<h_g(\bx(0))<0}$. Then  ${\bx(t) \in S \cup D}$, that is, ${h_g(\bx(t))>-1}$ for all ${t\geq0}$, and $\bx(t)$ converges to $\partial S$, that is, ${\lim_{t\to\infty}h_g(\bx(t))=0}$. 
\end{theorem}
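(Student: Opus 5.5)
Work along the closed-loop trajectory with the scalar $y(t):=h_g(\bx(t))$, defined on the maximal interval of existence $I(\bx(0))=[0,\tau)$. By \eqref{eq.1stgraceful}, $y$ satisfies the scalar differential inequality
\[
\dot y \geq -\beta\!\left(\frac{y}{y+1}\right)
\]
for as long as $y>-1$. The plan is to compare $y$ with the scalar comparison system $\dot z=-\beta(z/(z+1))$, $z(0)=y(0)\in(-1,0)$, and to use the barrier-like (reciprocal-CBF-type) behaviour of the right-hand side near $-1$. The figure of a Lyapunov candidate suggests an energy argument, so I would define
\[
V(y)=-\ln(1+y) \quad \text{or} \quad V(y)=\frac{-y}{1+y}.
\]
This is positive on $(-1,0)$, vanishes at $0$, and tends to $\infty$ as $y\to-1^+$. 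The proof then splits into three steps.

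\textbf{Step 1: monotonicity on $D$.} For $y\in(-1,0)$ we have $r=y/(y+1)<0$, so $\beta(r)<0$, and hence $\dot y\geq -\beta(r)>0$. So $y$ is nondecreasing while it stays in $D$; in particular $y(t)\geq y(0)>-1$ until (possibly) $y$ reaches $0$. Once $y$ reaches $0$, the previous theorem (forward invariance of $S$, via the zeroing-CBF property) keeps $y\geq 0$ afterwards. Together these give $h_g(\bx(t))\geq \min\{y(0),0\}>-1$ on the whole interval of existence, which is the failsafe claim. Equivalently, $\dot V\leq 0$ with $V$ unbounded at $-1$ rules out reaching $\partial C$ in finite time. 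I would present both versions, with the $V$ version matching the stiffening-spring viewpoint.

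\textbf{Step 2: convergence.} If $y$ enters $S$ in finite time, it stays there, and I would interpret the claim as $y\to\partial S$ only in the case that $y$ remains in $D$. This case is the subtle point, because a trajectory already in $S$ need not converge to $0$, since the nominal controller may keep it in the interior. So assume $y(t)<0$ for all $t$. Then $y$ is increasing and bounded above by $0$, so it has a limit $y^*\in[y(0),0]$. If $y^*<0$, then on $[y(0),y^*]$ the continuous function $-\beta(y/(y+1))$ is bounded below by some $c>0$, so $\dot y\geq c$. This forces $y\to\infty$, a contradiction. Hence $y^*=0$. Alternatively, use $\dot V\leq -W(y)$ with $W$ positive definite on $(-1,0)$ and apply a standard Lyapunov/LaSalle argument.

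\textbf{Step 3: global existence.} The main obstacle is to make sure that $t\to\infty$ is meaningful, i.e.\ $\tau=\infty$. I would invoke the forward-completeness assumption stated in Section~\ref{sec:background}. If that is not enough, I would note that the statement concerns $h_g$ along whatever solution exists and restrict the conclusion to $I(\bx(0))$. I would also need to carefully justify the comparison lemma for the non-strict differential inequality. This is where local Lipschitzness of $\bu(\bx)$, and hence $C^1$ regularity of $y(t)$, is used.
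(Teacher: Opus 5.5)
Your argument is correct, and it takes a more elementary route than the paper. The paper uses the energy-like function $V(h_g)=h_g-\ln(h_g+1)$ of \eqref{eq:Lyap}. It computes $\dot V=\dot h_g\,h_g/(h_g+1)$, multiplies \eqref{eq.1stgraceful} by the negative factor $h_g/(h_g+1)$ to get $\dot V<0$ on $D$, and then appeals to Lyapunov's theorem for asymptotic stability of $h_g=0$.

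Since $V'(h_g)=h_g/(h_g+1)<0$ on $(-1,0)$, ``$\dot V<0$ on $D$'' is equivalent to your Step~1, $\dot h_g>0$ on $D$. The paper's $V$ gives a stiffening-spring energy picture, and that picture carries over to the relative-degree-2 theorem, where $h_g$ is no longer monotone in $D$ and a kinetic term must be added. Your route gives the explicit bound $h_g(\bx(t))\ge h_g(\bx(0))$, which keeps the trajectory uniformly away from $\partial C$. It also proves convergence via monotone limits without presupposing that $h_g=0$ is an equilibrium of the closed loop; it need not be, since \eqref{eq.1stgraceful} is only an inequality.

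Two simplifications are available. You never need a comparison lemma, only the sign of $\dot y$ and a lower bound on it. In Step~2 you may take $c=-\beta\big(y^*/(y^*+1)\big)>0$ directly, because both $y\mapsto y/(y+1)$ and $\beta$ are increasing, so continuity of $\beta$ is not needed.

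You should state your restriction of the convergence claim as a correction, not an interpretation. The convergence part of the statement is false as written, and the paper's proof has a gap at exactly that point. The paper asserts that $\dot V$ is negative wherever $V\neq 0$, but it only shows this on $D$. On $\mathrm{Int}(S)$ the factor $h_g/(h_g+1)$ is positive, so \eqref{eq.1stgraceful} yields only a lower bound on $\dot V$.

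For a concrete counterexample, take $\dot x=u$, $h_g$ as in \eqref{eq.h_g}, and $\beta(r)=\gamma r$. The locally Lipschitz controller $u=\max\{1,\,-\gamma(x_\mathrm{sf}-x_\mathrm{w})(x-x_\mathrm{sf})/(x-x_\mathrm{w})\}$ satisfies \eqref{eq.1stgraceful} for all $x>x_\mathrm{w}$. Yet from any $x(0)\in(x_\mathrm{w},x_\mathrm{sf})$ it gives $x(t)\ge x(0)+t$, hence $h_g\to\infty$.

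What is true, and what your Steps~1--2 establish, is that $S\cup D$ is forward invariant. Either the trajectory reaches $\partial S$ in finite time and remains in $S$ afterwards, or it stays in $D$ and $h_g(\bx(t))\to 0$. In both cases $\liminf_{t\to\infty}h_g(\bx(t))\ge 0$.

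Your explicit treatment of global existence (forward completeness, or restricting to $I(\bx(0))$) is also more careful than the paper, which assumes $\tau=\infty$ without comment.
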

\vspace{-12pt}
\begin{proof}
Consider the Lyapunov candidate function 
\begin{equation}\label{eq:Lyap}
V(h_g)=h_g-\ln(h_g+1),
\end{equation}
depicted in Fig.~\ref{Fig.1stOrderLyapunov}.
This function equals zero if and only if ${h_g=0}$, and is otherwise positive. 
Moreover, the derivative of this Lyapunov candidate function is given by
\begin{equation}\label{eq:lider1} 
    \dot{V}=\dot{h}_g-\frac{\dot{h}_g}{|h_g+1|}.
\end{equation}
For ${h_g>-1}$ we have ${|h_g+1|= h_g+1}$, which results in
\begin{equation}\label{eq.LyapunovProof}
    \dot{V}=\dot{h}_g\left(1-\frac{1}{h_g+1}\right) = \dot{h}_g\frac{h_g}{h_g+1}.
\end{equation}
Multiplying both sides of the inequality \eqref{eq.1stgraceful} with ${h_g/(h_g+1)<0}$ yields
\begin{equation}\label{eq:lider2} 
\dot{h}_g\frac{h_g}{h_g+1} \leq -
\beta \left(\frac{h_g}{h_g+1}\right)  \frac{h_g}{h_g+1}.
\end{equation}
Comparing \eqref{eq:lider1} and \eqref{eq:lider2} results in
\begin{equation}\label{eq:lider3} 
    \dot{V} \leq -\eta \left(\frac{h_g}{h_g+1}\right),
\end{equation}
where $\eta$ is a class-$\mathcal{K}$ function.
Therefore, exploiting that ${-1<h_g<0}$, we obtain ${\dot{V}<0}$. 
Since the time derivative of the Lyapunov candidate function is negative at all points where the function is nonzero implies that ${h_g=0}$ is asymptotically stable.
This implies that ${S\cup D}$ is forward invariant and that the systems converges to $\partial S$ as $t\to\infty$, which completes the proof.
\end{proof}

\textbf{Remark:} Intuitively, the above theorem asserts that as long as the system is initialized in the failsafe set $D$, it is guaranteed to avoid catastrophic failure. 
This is a consequence of the ``stiffening" feedback effect provided by the nonlinear safety constraint \eqref{eq.1stgraceful}. 
In fact, one can intuitively think of this stiffening effect as being analogous to a reciprocal CBF around ${h_g=-1}$.  

\textbf{Remark:} The above choice of Lyapunov candidate function is inspired by the fact that the nonlinear term in the proposed safety constraint acts in a manner analogous to a stiffening actuator (e.g., a stiffening spring) in the limit as ${h_g\rightarrow -1}$. 
In fact, the Lyapunov candidate function \eqref{eq:Lyap} was constructed by analytically integrating the area under the function ${h_g/(h_g+1)}$ starting with ${h_g=0}$, in a manner analogous to computing the energy stored in a stiffening spring as a function of deflection. 
This analogy is particularly valuable for the extension of the proposed graceful controller to systems of relative degree 2, as discussed further below. 
\\

% \begin{theorem}
% Suppose that ${-1<h_g(\bx(t))<0}$ over a time interval ${t\in[t_\mathrm{o},t_\mathrm{f}]}$. 
% Then $h_g(\bx(t))$ approaches zero monotonically throughout this interval. 
% \end{theorem}

% \begin{proof}
% The above Lyapunov stability analysis shows that $V(h_g(\bx(t)))$ diminishes monotonically at all times during the above time interval. 
% Given the monotonicity of the Lyapunov candidate function \eqref{eq:Lyap} with respect to $h_g$ for all values on the open set ${(0,1)}$, it follows that $h_g$ monotonically approaches zero over the time interval ${t\in[t_\mathrm{o},t_\mathrm{f}]}$.
% \end{proof}

%%%%%%%%%%%%%%%%%%%%%%%%%%%%%%%%%%%%%%%%%%%%%%%%%%%%%
\begin{figure}[!t]
\begin{center}
\includegraphics[width=0.47\textwidth]{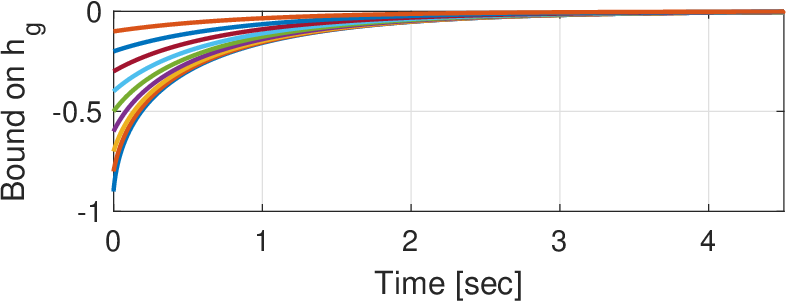}  % The printed column 
\caption{Evolution of the bound on $h_g$ over time} % width is 8.4 cm.
\label{Fig.bound}                 % Size the figures 
\end{center}                 % accordingly.
\end{figure}
%%%%%%%%%%%%%%%%%%%%%%%%%%%%%%%%%%%%%%%%%%%%%%%%%%%%%

\textbf{Remark:} To gain further insight into the above monotonic convergence towards safety, one may consider equality and the linear function ${\beta(r)=\gamma r}$ in \eqref{eq.1stgraceful}:
\begin{equation}  
\frac{{\rm d}h_g}{{\rm d}t}=-\gamma \frac{h_g}{h_g+1}.
\end{equation}
This can be solved using the separation of variables:
\begin{equation} \label{eq.ImplicitSolution}
\begin{split}
  &\int_{h_g(0)}^{h_g(\tau)} \frac{h_g+1}{h_g} {\rm d}h_g = -\gamma\int_{0}^{\tau} {\rm d}t
  \\
  & \tau = -\frac{1}{\gamma}\big(h_g(\tau)+ \ln|h_g(\tau)|-h_g(0) - \ln|h_g(0)|\big).
  \end{split}
\end{equation}
The function $h_g(\tau)+ \ln|h_g(\tau)|$ is monotonic in terms of $\tau$. 
Therefore, according to the comparison lemma, the above implicit equation provides a monotonic bound on the evolution of $h_g(\tau)$ versus time.
This bound reinforces the insight that the proposed graceful safety controller returns monotonically towards the primary safety boundary if it is breached. 
Fig.~\ref{Fig.bound} illustrates this behavior by plotting the bound on $h_g$ given by \eqref{eq.ImplicitSolution} as a function of time for different initial violations of the primary safety condition. 
Graceful safety control, in this case, imposes monotonic bounds on convergence towards the primary safe set.
If ${\nabla h_g(\bx) < \mathbf{0}}$ for ${\bx \in \partial S \cup D \cup \partial C}$ also holds, this monotonicity also results in a "streamlined" flow in state space.

%%%%%%%%%%%%%%%%%%%%%%%%%%%%%%%%%%%%%%%%%%%%%%%%%%%%%%%%%%%%%%
\subsection{Second-order graceful safety control framework}
%%%%%%%%%%%%%%%%%%%%%%%%%%%%%%%%%%%%%%%%%%%%%%%%%%%%%%%%%%%%%%

This section extends the core idea behind graceful safety control to systems of relative degree 2. 
Specifically, we consider the second-order exponential barrier constraint in~\eqref{eq:reldeg2CBF_exp2} as a motivation.
Suppose that the lowest-order term in this constraint is replaced with a stiffening term, analogous to~\eqref{eq.1stgraceful} with ${\beta(r)=\gamma_1 r}$, to furnish the following nonlinear second-order constraint 
\begin{equation}\label{eq.2ndGraceful0}
\ddot{h}_g + (\gamma_1 + \gamma_2 ) \dot{h}_g + \gamma_1 \gamma_2 \frac{h_g}{h_g+1} \geq 0, 
\end{equation}
Alternatively, one can express this equation in a more general form: 
\begin{equation} \label{eq.2ndGraceful}
%\begin{split}
%&\ddot{h}_g+2\zeta\omega\dot{h}_g+\omega^2(1-\frac{1}{h_g+1})\geq 0 
%\\
%\implies &
\ddot{h}_g + 2\zeta\omega_\mathrm{n} \dot{h}_g + \omega_\mathrm{n}^2 \frac{h_g}{h_g+1}\geq 0. 
%\end{split}
\end{equation}
The structure of the above constraint is analogous to a nonlinear mass-spring-damper system where the term $h_g/(h_g+1)$ provides a stiffening effect. 
Moreover, when this inequality is linearized around ${h_g(x)=0}$, it becomes analogous to a standard mass-spring damper with a natural frequency $\omega_\mathrm{n}$ and a damping ratio $\zeta$. 
One can even make $\gamma_1$ and $\gamma_2$ a complex conjugate pair (with negative real parts) which corresponds to ${0<\zeta<1}$, i.e., an underdamped system.

In general, we consider that $\dot{h}_g(\bx,\bu)$ does not depend on $\bu$, i.e., we have $\dot{h}_g(\bx)$. 
Then we impose the constraint
\begin{equation} \label{eq.2ndgraceful_general}
  \dot{h}_g(\bx,\bu)  \geq - 2\zeta\omega_\mathrm{n} \dot{h}_g(\bx) - \omega_\mathrm{n}^2 \frac{h_g(\bx)}{h_g(\bx)+1}.
\end{equation}
with ${\zeta>0}$ and $\omega_\mathrm{n}>0$ to guarantee graceful safety.

%%%%%%%%%%%%%%%%%%%%%%%%%%%%%%%%%%%%%%%%%%%%%%%%%%%%%%%%%%%%%%
\subsection{Second-order graceful safety guarantees}
%%%%%%%%%%%%%%%%%%%%%%%%%%%%%%%%%%%%%%%%%%%%%%%%%%%%%%%%%%%%%%

The above second-order inequality provides two different safety guarantees, as shown below. 
\\

\begin{theorem}
Suppose that \eqref{eq.2ndgraceful_general} holds and ${\bx(0) \in D}$,  i.e.,  ${-1<h_g(\bx(0))<0}$. 
Then  ${\bx(t) \in S \cup D}$, that is, ${h_g(\bx(t))>-1}$ for all ${t\geq0}$, and $\bx(t)$ converges to $\partial S$, that is, ${\lim_{t\to\infty}h_g(\bx(t))=0}$.
% Consider the following set of unsafe states: 
% \begin{equation}
%     S_u := \{x:-1<h_g(x)<0\}
% \end{equation}
% For all initial states in this set, regardless of the initial value of $\dot{h}_g$, the system governed by Eq.~\ref{eq.2ndGraceful} is guaranteed to converge asymptotically to $h_g=0$. 
\end{theorem}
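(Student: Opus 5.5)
The plan is an energy argument that mirrors the Lyapunov proof of the first-order result. The candidate $V(h_g)=h_g-\ln(h_g+1)$ from \eqref{eq:Lyap} plays the role of the potential energy stored in the stiffening spring $h_g/(h_g+1)$. I will work with the total energy
\begin{equation*}
E(h_g,\dot h_g)=\tfrac12\dot h_g^2+\omega_\mathrm{n}^2\bigl(h_g-\ln(h_g+1)\bigr),
\end{equation*}
defined for $h_g>-1$. It is nonnegative, vanishes only at $(h_g,\dot h_g)=(0,0)$, and satisfies $E\to\infty$ as $h_g\to-1^+$ for any bounded $\dot h_g$. Differentiating gives $\dot E=\dot h_g\bigl(\ddot h_g+\omega_\mathrm{n}^2\,h_g/(h_g+1)\bigr)$. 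Writing the constraint \eqref{eq.2ndgraceful_general} as $\ddot h_g+2\zeta\omega_\mathrm{n}\dot h_g+\omega_\mathrm{n}^2 h_g/(h_g+1)=s(t)$ with slack $s\ge0$, this becomes
\begin{equation*}
\dot E=-2\zeta\omega_\mathrm{n}\dot h_g^2+s\,\dot h_g .
\end{equation*}

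\textbf{Step 1: $h_g(\bx(t))>-1$ for all $t\ge0$.} Suppose $h_g$ reached $-1$ at some first time $T$. Let $t_0<T$ be the last time before $T$ with $\dot h_g(t_0)\ge0$, or $t_0=0$ if there is none; such a time exists because $h_g$ decreases into $-1$. On $(t_0,T)$ we have $\dot h_g<0$. The slack term is then nonpositive, so $\dot E\le-2\zeta\omega_\mathrm{n}\dot h_g^2\le0$. Hence $E(T^-)\le E(t_0)<\infty$, which contradicts $E\to\infty$ as $h_g\to-1^+$. The same bound $E(t)\le\max\{E(0),\,\text{values at the turning points}\}$ also keeps $\dot h_g$ bounded on descending phases. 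This rules out finite-time escape, so the solution exists for all $t\ge0$ and $S\cup D$ is forward invariant.

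\textbf{Step 2: $\lim_{t\to\infty}h_g=0$.} The graceful constraint is the safety constraint of the QP, as in \eqref{eq.fromKKT2}. Whenever the desired input drives the state toward catastrophe, the minimizer makes the constraint active, i.e.\ $s=0$. In that situation $\dot E=-2\zeta\omega_\mathrm{n}\dot h_g^2\le0$ everywhere, not only on descending phases. Sublevel sets of $E$ in $\{h_g>-1\}$ are compact, so I will apply LaSalle's invariance principle. On the set $\{\dot h_g\equiv0\}$ the dynamics force $\omega_\mathrm{n}^2 h_g/(h_g+1)=0$, i.e.\ $h_g=0$. Hence $(h_g,\dot h_g)\to(0,0)$, which means $\bx(t)\to\partial S$.

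\textbf{Main obstacle.} The difficulty is the phases where $s>0$ and $\dot h_g>0$, because there the energy can grow. I will handle them with a comparison argument: positive slack only pushes $h_g$ upward. If $h_g$ enters $S$, the zeroing-CBF part of the constraint (linearised, it is the exponential CBF \eqref{eq:reldeg2CBF_exp2}) keeps it in $S$. Combined with the active-constraint dynamics near the boundary, the trajectory cannot reenter $D$ without $E$ decreasing, so $h_g$ is squeezed onto $\partial S$. The most delicate point is making this squeeze precise: I need to exclude oscillations that reenter $D$ infinitely often while $E$ stays bounded away from zero. I would first try to bound the energy gained in each positive-slack phase by the energy dissipated in the following descending phase.
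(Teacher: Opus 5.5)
Your Step 1 is sound in spirit and close to the paper's reasoning: while $\dot h_g<0$ the slack can only dissipate, so the energy cannot blow up on a descending phase. However, the existence of a last time $t_0<T$ with $\dot h_g(t_0)\ge0$ does not follow from ``$h_g$ decreases into $-1$''; a priori, zeros of $\dot h_g$ could accumulate at $T$. You can justify it from the constraint. Inside $D$, $\dot h_g=0$ forces $\ddot h_g\ge-\omega_\mathrm{n}^2h_g/(h_g+1)>0$, so $\dot h_g$ can only cross zero upward there. Hence $\dot h_g<0$ throughout the last excursion into $D$ before $T$. Your claim that this also rules out finite-time escape is unsupported, since nothing bounds $\dot h_g$ on ascending phases.

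The genuine gap is Step 2. The theorem assumes only the inequality \eqref{eq.2ndgraceful_general}. Setting $s=0$ by appeal to the QP is an extra hypothesis, and it fails for the QP itself whenever $u_\mathrm{d}>u_\mathrm{sf}$. Without it, $\dot E$ contains the sign-indefinite term $s\dot h_g$, so LaSalle does not apply.

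Your fallback squeeze rests on a false premise. Under a relative-degree-2 constraint, $S$ is not forward invariant: a state with $h_g=0$, $\dot h_g<0$ leaves it at once. The paper only obtains invariance of the velocity-dependent sets $S_{g,1}$, $S_{g,2}$, and only for $\zeta>1$. Moreover, energy injected by arbitrary slack cannot be bounded by later dissipation.

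The idea you are missing is the paper's one-sided energy $V=\tfrac12\dot h_g^2\bigl(1-\Theta(\dot h_g)\bigr)+\omega_\mathrm{n}^2\bigl(h_g-\ln(h_g+1)\bigr)$, which counts kinetic energy only while $\dot h_g<0$. On ascending phases in $D$ one gets $\dot V=\omega_\mathrm{n}^2\frac{h_g}{h_g+1}\dot h_g<0$ whatever the slack. On descending phases, $\dot V\le-2\zeta\omega_\mathrm{n}\dot h_g^2$. So $V$ is nonincreasing on all of $D$. This yields the barrier claim without any sign-change bookkeeping and removes your ``main obstacle'' inside $D$.

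Be aware, though, that no squeeze can give $\lim_{t\to\infty}h_g=0$ from the inequality alone. Take $\dot h_g(0)=0$ and constant $\ddot h_g\equiv c\ge\omega_\mathrm{n}^2|h_g(0)|/(1+h_g(0))$: the constraint holds for all $t$, yet $h_g\to\infty$. The paper's Lyapunov argument likewise controls the motion only inside $D$. Convergence needs an additional hypothesis such as an active constraint, and under that hypothesis your LaSalle argument with $E$ is correct.
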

\vspace{-12pt}
\begin{proof}
We begin by constructing the following Lyapunov candidate function
\begin{equation}\label{eq.Lyap2}
    V(h_g,\dot{h}_g) = \frac{1}{2} \dot{h}_g^2 \Big( 1 - \Theta\big(\dot{h}_g\big) \Big) + \omega_\mathrm{n}^2 \big(h_g-\ln(h_g+1) \big),
\end{equation}
where $\Theta$ denotes the unit step function, that is,
\begin{equation}
\begin{split}
\Theta(x)=
\begin{cases}
1, &\!\!\!\text{if}\,\,\, x\geq 0,
\\
0, &\!\!\!\text{if}\,\,\, x < 0,
\end{cases}
\,\,\,\Rightarrow\,\,\,
1-\Theta(x)=
\begin{cases}
0, &\!\!\!\text{if}\,\,\, x\geq 0,
\\
1, &\!\!\!\text{if}\,\,\, x < 0.
\end{cases}
\end{split}
\end{equation}

This Lyapunov candidate function is continuously differentiable, since the first derivative of this function with respect to $\dot{h}_g$ exists everywhere (while the second derivative does not exist at ${\dot{h}_g=0}$). 
Moreover, ${V=0}$ when ${h_g=0}$ and ${\dot{h}_g\geq0}$ and ${V>0}$ otherwise. 
The structure of this Lyapunov candidate function is analogous to a potential energy term plus a kinetic energy term that is computed only when ${\dot{h}_g<0}$. 

When $\dot{h}_g>0$, the time derivative of \eqref{eq.Lyap2} this candidate function is given by 
\begin{equation}
    \dot{V} = \omega_\mathrm{n}^2 \left(1-\frac{1}{|h_g+1|}\right) \dot{h}_g
    = \omega_\mathrm{n}^2 \frac{h_g}{h_g+1} \dot{h}_g,
\end{equation}
where the last equality holds since ${-1<h_g<0}$. 
This derivative is guaranteed to be negative. 
Moreover, when ${\dot{h}_g<0}$, the time derivative of the Lyapunov candidate function becomes
\begin{equation}
    \dot{V} = \left(\ddot{h}_g + \omega_\mathrm{n}^2 \frac{h_g}{h_g+1}\right) \dot{h}_g,
\end{equation}
where, again, we exploited ${-1<h_g<0}$. 
According to \eqref{eq.2ndGraceful}, the expression within the parenthesis is larger than ${-2\zeta\omega_\mathrm{n}\dot{h}_g}$
% \begin{equation} 
% \ddot{h}_g + \omega_\mathrm{n}^2 \frac{h_g}{h_g+1} \geq -2\zeta\omega_\mathrm{n}\dot{h}_g
% \end{equation}
Since $\dot{h}_g$ is negative, it follows that ${\dot{V}<0}$ and according to Lyapunov's theorem ${h_g=0}$ is asymptotically stable. 
This implies that ${S\cup D}$ is forward invariant and that the systems converges to $\partial S$ as $t\to\infty$, which completes the proof.
\end{proof}

\vspace{-12pt}
\textbf{Remark:} The above result is essentially a failsafe guarantee associated with the proposed controller. 
It implies that even if the controller's primary safety threshold is breached, the controller guarantees that the failsafe threshold at ${h_g=-1}$ is not be breached, regardless of $\dot{h}_g$. 
This result is a consequence of the stiffening behavior of the feedback term ${h_g/(h_g+1)}$ as $h_g$ approaches $-1$, and is essential to this paper's notion of graceful safety-critical control. 
Interestingly, this graceful guarantee does not require ${\zeta>1}$, meaning that it exists even if the control design parameter $\zeta$ provides underdamped linearized behavior around ${h_g=0}$. 

The following theorem shows that for ${0<\zeta<1}$ the forward invariance of the primary safety set is guaranteed.

\vspace{12pt}
\begin{theorem}
    Suppose that \eqref{eq.2ndgraceful_general} holds and suppose that ${\zeta>1}$ and ${\omega_\mathrm{n}>0}$. 
    Then the two sets
    \begin{equation} \label{eq.SafeSet2spec}
    \begin{split}
    S_{g,1} &= \{ \bx \in\mathbb{R}^n : h_g(\bx)) \geq 0, \dot{h}_g(\bx) + \gamma_1 h_g(\bx)) \geq 0\}, 
    \\
    S_{g,2} &= \{ \bx \in\mathbb{R}^n : h_g(\bx)) \geq 0, \dot{h}_g(\bx) + \gamma_2 h_g(\bx)) \geq 0\}, 
    \end{split}
    \end{equation}
    are both forward invariant, where $\gamma_1$ and $\gamma_2$ are the real solutions of the characteristic equation 
    \begin{equation}\label{eq.char}
    \lambda^2 + 2\zeta\omega_\mathrm{n}\lambda + \omega_\mathrm{n}^2=0.
    \end{equation}
\end{theorem}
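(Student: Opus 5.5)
The plan is to show that, inside the primary safe set, the graceful constraint \eqref{eq.2ndgraceful_general} is stronger than the exponential CBF constraint \eqref{eq:reldeg2CBF_exp2}. The forward invariance of $S_{g,1}$ and $S_{g,2}$ then follows from the high-order CBF argument recalled in Section~\ref{sec:background}.

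First I fix the meaning of $\gamma_1,\gamma_2$. For $\zeta>1$, \eqref{eq.char} has two distinct real roots, both negative. I take $\gamma_1,\gamma_2>0$ to be their magnitudes, so that $\gamma_1+\gamma_2=2\zeta\omega_\mathrm{n}$ and $\gamma_1\gamma_2=\omega_\mathrm{n}^2$. This makes the sets in \eqref{eq.SafeSet2spec} match \eqref{eq:invset_exp}. The constraint then reads
\begin{equation*}
\ddot{h}_g+(\gamma_1+\gamma_2)\dot{h}_g+\gamma_1\gamma_2\frac{h_g}{h_g+1}\geq 0 .
\end{equation*}
The key elementary fact is the identity $h_g-\frac{h_g}{h_g+1}=\frac{h_g^2}{h_g+1}$, which is nonnegative whenever $h_g>-1$. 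Hence, wherever $h_g\geq 0$,
\begin{equation*}
\ddot{h}_g+(\gamma_1+\gamma_2)\dot{h}_g+\gamma_1\gamma_2 h_g\;\geq\;\gamma_1\gamma_2\frac{h_g^2}{h_g+1}\;\geq\;0 .
\end{equation*}

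Next I define $h_{2}:=\dot{h}_g+\gamma_1 h_g$. The previous inequality becomes $\dot{h}_{2}\geq-\gamma_2 h_{2}+\gamma_1\gamma_2 h_g^2/(h_g+1)$. To avoid circularity (the bound uses $h_g\ge 0$, which is what we are proving), I will argue by Nagumo's tangency condition on the closed set $S_{g,1}$, using the existence and uniqueness of closed-loop solutions under a locally Lipschitz controller. Boundary points of $S_{g,1}$ are of three kinds.

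\begin{itemize}
\item If $h_{2}=0$ and $h_g\geq 0$, then $\dot{h}_{2}\geq\gamma_1\gamma_2 h_g^2/(h_g+1)\geq 0$.
\item If $h_g=0$ and $h_{2}>0$, then $\dot{h}_g=h_{2}>0$.
\item If $h_g=0$ and $h_{2}=0$, then $\dot{h}_g=0$ and $\dot{h}_{2}\geq 0$. Here I use $\ddot{h}_g\geq 0$ to conclude that $h_g$ cannot become negative to first or second order.
\end{itemize}

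In each case the vector field points into the set. Alternatively, I would give a first-exit-time comparison argument: on $[0,t^*]$ we have $h_g\geq0$, so $h_{2}(t)\geq e^{-\gamma_2 t}h_{2}(0)\geq0$ and $h_g(t)\geq e^{-\gamma_1 t}h_g(0)\geq 0$. Exchanging the roles of $\gamma_1$ and $\gamma_2$, which the symmetric coefficients $\gamma_1+\gamma_2$ and $\gamma_1\gamma_2$ allow, gives the same result for $S_{g,2}$.

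The main obstacle is making the boundary step rigorous. The comparison $h_g\geq h_g/(h_g+1)$ holds only for $h_g\geq0$; for $h_g\in(-1,0)$ the inequality reverses. So the exponential CBF bound cannot simply be applied globally, and the invariance argument must only ever use the constraint at points where $h_g\geq 0$. The corner point $h_g=h_{2}=0$ is the delicate case, since there both first derivatives vanish. I would first try to handle it with Nagumo's theorem. Failing that, I would use the first-exit-time argument with the positive slack term $\gamma_1\gamma_2 h_g^2/(h_g+1)$ absorbed into the comparison lemma. As a last resort, I would use a strict-inequality perturbation of the constraint followed by a limit.
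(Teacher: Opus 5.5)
Your proof is the paper's: for $h_g\ge 0$ you bound $\gamma_1\gamma_2\,h_g/(h_g+1)$ by $\gamma_1\gamma_2 h_g$ to recover the exponential-CBF inequality, pass to $h_{g,1}=\dot h_g+\gamma_1 h_g$ (and symmetrically $h_{g,2}$), and use the standard invariance argument; your $\dot h_{g,1}\ge-\gamma_2 h_{g,1}$ is the correct form of the paper's display, which pairs $h_{g,1}$ with $\gamma_1$ instead of $\gamma_2$. One correction: $h_g\ge h_g/(h_g+1)$ does \emph{not} reverse on $(-1,0)$, since your own identity $h_g-h_g/(h_g+1)=h_g^2/(h_g+1)$ shows it holds for every $h_g>-1$, so the exponential inequality is valid on the open set $\{h_g>-1\}\supset S_{g,1}$, the comparison $h_{g,1}(t)\ge e^{-\gamma_2 t}h_{g,1}(0)\ge 0$, $h_g(t)\ge e^{-\gamma_1 t}h_g(0)\ge 0$ runs along the entire trajectory (which therefore never nears $h_g=-1$), and the circularity worry, the corner-point Nagumo analysis and the perturbation fallback all become unnecessary.
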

\vspace{-12pt}
\begin{proof}
If ${\zeta>1}$ and $\omega_\mathrm{n}>0$, then the characteristic equation \eqref{eq.char} has two negative real eigenvalues. 
Denoting these eigenvalues by $-\gamma_1$ and $-\gamma_2$, one can write \eqref{eq.2ndgraceful_general} (or equivalently \eqref{eq.2ndGraceful}) into the form \eqref{eq.2ndGraceful0}.
% \begin{equation}
% \ddot{h}_g+(\lambda_1+\lambda_2)\dot{h}_g+\lambda_1\lambda_2(\frac{h_g}{h_g+1})\geq 0
% \end{equation}

If ${\bx \in S_{g,1}}$ or ${\bx \in S_{g,2}}$, based on the definitions in \eqref{eq.SafeSet2spec}, we have ${h_g\geq0}$.
%This implies that ${h_g+1\geq1}$, which in turn implies that $1/(h_g(x)+1)\leq1$. 
This implies that ${h_g\geq h_g/(h_g+1)}$, leading to the  inequality
\begin{equation}\label{eq:exponentialGrace}
\ddot{h}_g + (\gamma_1 + \gamma_2) \dot{h}_g + \gamma_1\gamma_2 h_g \geq 0.
\end{equation}
Now define the following two functions 
\begin{equation}
\begin{split}
    h_{g,1} &:= \dot{h}_g + \gamma_1 h_g,
\\
    h_{g.2} &:= \dot{h}_g + \gamma_2 h_g.
\end{split}    
\end{equation}
Plugging these into \eqref{eq:exponentialGrace} yields
\begin{equation}
\begin{split}
    \dot{h}_{g,1} + \gamma_1 h_{g,1} \geq 0,
\\
    \dot{h}_{g,2} + \gamma_2 h_{g,2} \geq 0.
\end{split}    
\end{equation}
These inequalities guarantee the forward invariance of the sets $S_{g,1}$ and $S_{g,2}$ defined in \eqref{eq.SafeSet2spec}.
This completes the proof.
\end{proof}
\vspace{-12pt}
\textbf{Remark:} Intuitively, the above theorem provides a conservative estimate of two forward invariant sets for the proposed controller.
The forward invariance of these two sets is analogous to the safety guarantees associated with second-order exponential CBFs. 
Also, one can intuitively interpret this result a velocity-dependent safe distance guarantee. 
This serves as the primary safety guarantee for the proposed controller. 
\vspace{12pt}

%\begin{proof}
%Consider a continuously differentiable CBF \eqref{eq.2ndGraceful}. Then, one can construct a candidate Lyapunov function, $V=\frac{\dot{h}_g^2(x)}{2}+\omega^2 (h_g(x)-ln(h_g+1))$. Notice that this Lyapunov function is positive everywhere except at $h_g(x)=0$ and $\dot{h}_g(x)=0$. The derivative of this Lyapunov function is $\dot{V}=-2\zeta\omega\dot{h}_g^2(x)$. This equation is negative everywhere except at $\dot{h}_g(x)=0$. Therefore, due to Lyapunov stability theorem, the CBF (\ref{eq.2ndGraceful}) is guaranteed to monotonically converge to its equilibrium point. Furthermore, since the only equilibrium point and the largest invariant set is $(\dot{h}_g(x),h_g(x))=(0,0)$, due to La Salle's invariance principle, it is guaranteed to converge asymptotically to its equilibrium, $h_g(x) = 0$. 
%\end{proof}

%\textit{Can we guarantee that 2nd order gCBF will always be a valid CBF?}

%%%%%%%%%%%%%%%%%%%%%%%%%%%%%%%%%%%%%%%%%%%%%%%%%%%%%%%%%%%%%%
\section{Simulation Study - Graceful Safety Control}\label{sec:simulation}
%%%%%%%%%%%%%%%%%%%%%%%%%%%%%%%%%%%%%%%%%%%%%%%%%%%%%%%%%%%%%%

In this section, a simulation study is conducted using the same example as in Sec.~\ref{sec:examples}, see Fig.~\ref{fig.object}, for both the first- and second-order graceful safety controllers. 
The simulation results clearly demonstrate that both of the proposed safety controllers are able to achieve ``grace" in the system by successfully preventing the object from colliding with the wall, unlike the second-order motivating example.

%%%%%%%%%%%%%%%%%%%%%%%%%%%%%%%%%%%%%%%%%%%%%%%%%%%%%%%%%%%%%%
\subsection{Scenario \#1: first-order graceful safety controller}
%%%%%%%%%%%%%%%%%%%%%%%%%%%%%%%%%%%%%%%%%%%%%%%%%%%%%%%%%%%%%%

We first construct the first-order graceful safety controller. We begin by defining the following barrier function:
\begin{equation} \label{eq.h_g}
  h_g(x) = \frac{x-x_\mathrm{sf}}{x_\mathrm{sf}-x_\mathrm{w}}.
\end{equation}
This barrier function allows the creation of three different safety sets, per \eqref{eq.gracefulsafetyset}. 
If the object is placed at ${x \geq x_\mathrm{sf}}$ we have ${h_g(x) \geq 0}$. 
This serves as the primary safe layer that enforces the desirable safety condition. 
If the object's position is ${x_\mathrm{w} < x <x_\mathrm{sf}}$ then we have ${-1 < h_g(x) < 0}$ and the system is considered to be in danger, but not yet catastrophic. 
When the object collides with the wall, i.e., ${x \leq x_\mathrm{w}}$ then ${h_g(x)\leq-1}$. 
This represents that the secondary failsafe layer is breached, resulting in a catastrophe.

%%%%%%%%%%%%%%%%%%%%%%%%%%%%%%%%%%%%%%%%%%%%%%%%%%%%%
\begin{figure} [t]
\begin{center}
\includegraphics[width=0.45\textwidth]{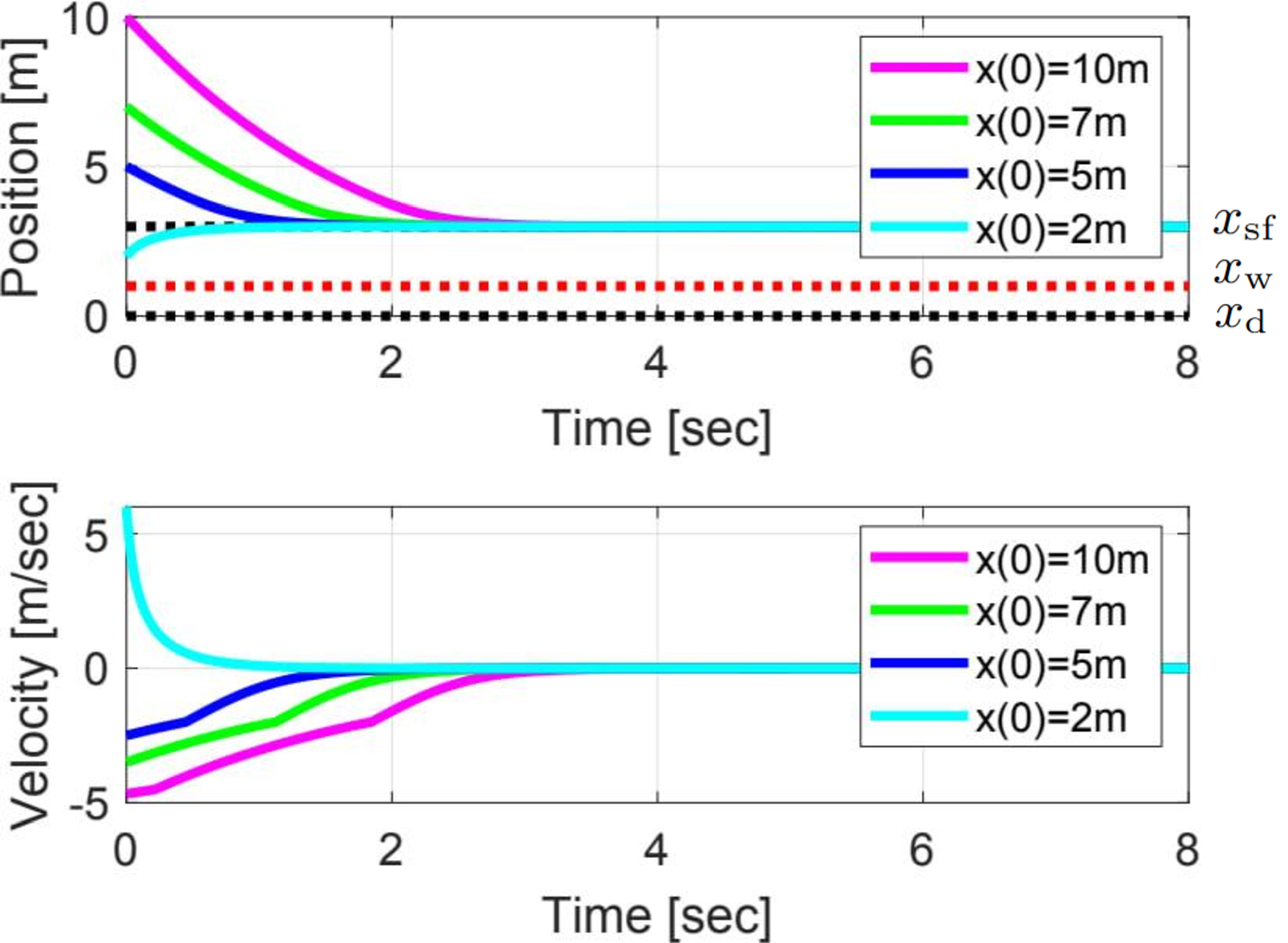}  % The printed column 
\caption{Position \& velocity for first-order graceful control} % width is 8.4 cm.
\label{fig.1stOrderGraceful1}                 % Size the figures 
\end{center}                 % accordingly.
\end{figure}
%%%%%%%%%%%%%%%%%%%%%%%%%%%%%%%%%%%%%%%%%%%%%%%%%%%%%

%%%%%%%%%%%%%%%%%%%%%%%%%%%%%%%%%%%%%%%%%%%%%%%%%%%%%
\begin{figure} 
\begin{center}
\includegraphics[width=0.45\textwidth]{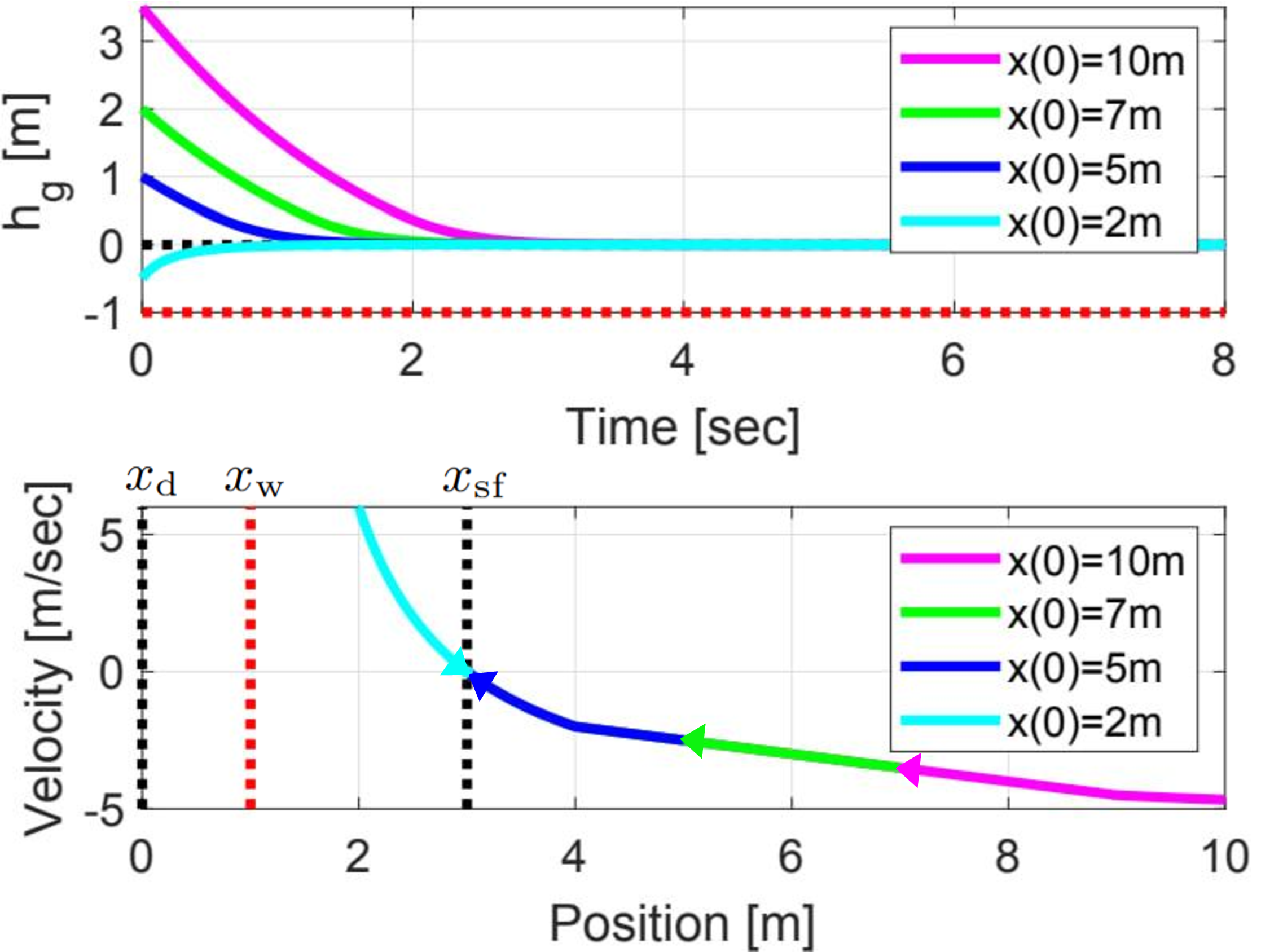}  % The printed column 
\caption{Barrier \& velocity vs. position plot for first-order graceful controller} % width is 8.4 cm.
\label{fig.1stOrderGraceful2}                 % Size the figures 
\end{center}                 % accordingly.
\end{figure}
%%%%%%%%%%%%%%%%%%%%%%%%%%%%%%%%%%%%%%%%%%%%%%%%%%%%%

Let us consider the first-order differential equation ${\dot{x}=u}$ (cf.~\eqref{eq:1storderODE}) with nominal controller
${u_\mathrm{d} = -k(x-x_\mathrm{d})}$ (cf.~\eqref{eq.1stOrderBaselineDesiredxdot}).
Then, the first-order graceful safety condition \eqref{eq.1stgraceful} with class-$\mathcal{K}$ function ${\beta(r)=\gamma r}$ yields 
\begin{equation}\label{eq.1stOrderGraceSafexdot}
    \begin{split}
        \dot{h}_g(x,u) &\geq \gamma \frac{h_g(x)}{h_g(x)+1}
        \\
        \Rightarrow \quad
        u &\geq - \gamma \frac{x-x_\mathrm{sf}}{x-x_\mathrm{w}} (x_\mathrm{sf}-x_\mathrm{w}) =: u_\mathrm{sf}.
    \end{split}
\end{equation}
Then one may set up the quadratic program \eqref{eq.1stOrderBenchmark_Optimization}
where $u_\mathrm{d}$ and $u_\mathrm{sf}$ are defined in \eqref{eq.1stOrderBaselineDesiredxdot} and \eqref{eq.1stOrderGraceSafexdot}. 
Using the KKT condition, this yields the analytical solution
\begin{equation}
\begin{split}
u^* &= \max\{u_\mathrm{d},u_\mathrm{sf}\} 
\\
&= \max\{-k(x-x_\mathrm{d}),- \gamma \frac{x-x_\mathrm{sf}}{x-x_\mathrm{w}} (x_\mathrm{sf}-x_\mathrm{w})\},
\end{split}
\end{equation}
cf.~\eqref{eq.fromKKT1}.
Substituting this into \eqref{eq:1storderODE} gives the closed-loop dynamics.

% As $h_g$ approaches -1, the denominator of the above constraint approaches 0, forcing the input to approach infinity at an extremely unsafe state. As a result, the quadratic safety control program is written as:
% \begin{equation}
% \begin{split}
%   &\min_{\dot{x}} \frac{1}{2}(\dot{x}(t)-\dot{x}_\mathrm{d}(t))^2 \\
%   &s.to:\\
%   &\dot{x}_\mathrm{d}(t)=k(x_{d}-x(t)),\\
%   &\dot{x}\geq\alpha(x_\mathrm{sf}-x_\mathrm{w})(\frac{1}{h_g+1}-1).\\
%   \label{eq.1stOrderGraceful_Optimization}
% \end{split}
% \end{equation}
% Here, the same performance objective from~\eqref{eq.1stOrderBaselineDesiredxdot} is implemented. Then, using the same values of distance parameters and eigenvalues as the benchmark case, the simulation is repeated using the above graceful safety controller. Once again, the simulation is conducted on MATLAB using the ``ode15s" function. The same 8-second time horizon with a time step of 1 ms is utilized.

Figure \ref{fig.1stOrderGraceful1} plots the position and velocity of the object, and Fig. \ref{fig.1stOrderGraceful2} depicts the graceful barrier function and position vs.\ velocity plot. 
In the $h_g$ graph, the primary safe layer lies above the black dotted line at ${h_g=0}$, while the boundary of the secondary failsafe layer is highlighted by the red dotted line at ${h_g=-1}$.
Similar to the benchmark case, the graceful controller brings the object towards $x_\mathrm{sf}$ regardless of the object's initial position. 
A noticeable difference is evident in the position vs.\ velocity plot. 
Due to the nonlinearity of the graceful barrier function, the controller implements more aggressive control input values compared to the zeroing CBF benchmark once the object approaches the catastrophic collision condition at $x_\mathrm{w}$. 
This is clearly demonstrated for ${x(0)=2\ {\rm m}}$  in Fig.~\ref{fig.1stOrderGraceful1}, where the graceful controller applies a much bigger initial velocity compared to Fig.~\ref{fig.1stOrderBenchmark1}, bringing the object back towards safety faster than the baseline case.

%%%%%%%%%%%%%%%%%%%%%%%%%%%%%%%%%%%%%%%%%%%%%%%%%%%%%%%%%%%%%%
\subsection{Scenario \#2: second-order graceful safety control example}
%%%%%%%%%%%%%%%%%%%%%%%%%%%%%%%%%%%%%%%%%%%%%%%%%%%%%%%%%%%%%%
 
We now present the second-order graceful safety control simulation study. 
Here we utilize the first-order differential equation ${\ddot{x}=u}$ (cf.~\eqref{eq:2ndorderODE}) with nominal controller
${u_\mathrm{d} = -k_1(x-x_\mathrm{d})-k_2\dot{x}}$ (cf.~\eqref{eq.DesiredAcceleration}).
%Unlike the previous section, the 2$^{nd}$ order of the system makes the object's acceleration the control input. The object's position and velocity are the state variables. 

We construct the second-order graceful CBF by utilizing the same graceful barrier function \eqref{eq.h_g} as for the first-order case. 
The second-order graceful safety controller \eqref{eq.2ndgraceful_general} yields the safety constraint
\begin{align}\label{eq.2ndOrderGraceSafexdot}
        \ddot{h}_g(x,\dot{x},u) &\geq -2\zeta\omega_\mathrm{n} \dot{h}_g(x,\dot{x}) - \omega_\mathrm{n}^2 \frac{h_g(x)}{h_g(x)+1}
        \\
        \Rightarrow \quad
        u &\geq - \omega_\mathrm{n}^2 \frac{x-x_\mathrm{sf}}{x-x_\mathrm{w}} (x_\mathrm{sf}-x_\mathrm{w}) - 2\zeta\omega_\mathrm{n} \dot{x} =: u_\mathrm{sf}. \nonumber
\end{align}
% \begin{equation}
% \begin{split}
%     &\ddot{h}_g+2\zeta\omega\dot{h}_g+\omega^2(1-\frac{1}{h_g+1})\geq0\\
%     &\ddot{x} \geq (x_\mathrm{sf}-x_\mathrm{w})(-2\zeta\omega\dot{h}_g-\omega^2(1-\frac{1}{h_g+1}))\\
% \end{split}
% \end{equation}
As a result, the optimization problem is constructed as 
\eqref{eq.1stOrderBenchmark_Optimization}
with $u_\mathrm{d}$ and $u_\mathrm{sf}$ are defined in \eqref{eq.DesiredAcceleration} and \eqref{eq.2ndOrderGraceSafexdot}. 
Then applying the KKT conditions results in
\begin{equation}
\begin{split}
u^* &= \max\{u_\mathrm{d},u_\mathrm{sf}\} 
\\
&= \max\{-k_1(x-x_\mathrm{d})-k_2\dot{x}, 
\\
&\qquad\quad\,\,\, - \omega_\mathrm{n}^2 \frac{x-x_\mathrm{sf}}{x-x_\mathrm{w}} (x_\mathrm{sf}-x_\mathrm{w}) - 2\zeta\omega_\mathrm{n} \dot{x} \},
\end{split}
\end{equation}
cf.~\eqref{eq.fromKKT2}, and substituting this into \eqref{eq:2ndorderODE} gives the closed-loop system.
% \begin{equation}
% \begin{split}
%   &\min_{\ddot{x}} \frac{1}{2}(\ddot{x}-\ddot{x}_\mathrm{d})^2 \\
%   &s.to:\\
%   &\ddot{x} \geq (x_\mathrm{sf}-x_\mathrm{w})(-2\zeta\omega\dot{h}_g-\omega^2(1-\frac{1}{h_g+1}))\\
%   &\ddot{x}_\mathrm{d} = -k_1(x-x_\mathrm{d})-k_2\dot{x}.\\
%    &h_g(x) = \frac{x-x_\mathrm{sf}}{x_\mathrm{sf}-x_\mathrm{w}}\\
%    \label{eq.2ndOrdergCBFOptimization}
% \end{split}
% \end{equation}
% Same as the benchmark controller, the optimization objective is to minimize the error between the desired and the actual acceleration of the object, subject to the safety constraint enforced by graceful safety control. In this controller, $h_g=-1$ represents the boundary between the undesirable set and the catastrophic (i.e., collision) condition. 

For the simulations the parameter $\omega_\mathrm{n}$ is set to ${2\ {\rm rad/s}}$. 
Both overdamped and underdamped controllers are simulated, using $\zeta$ values of 2 and 0.5, respectively. 
The same initial velocity value of ${-25\ {\rm m/s}}$ is implemented as for the benchmark case. 
Also, two different initial position values of ${5\ {\rm m}}$ and ${2\ {\rm m}}$ are used, which are the cases where the benchmark exponential controller failed to prevent collisions.

%%%%%%%%%%%%%%%%%%%%%%%%%%%%%%%%%%%%%%%%%%%%%%%%%%%%%
\begin{figure}
\begin{center}
\includegraphics[width=0.59\textwidth, angle=270]{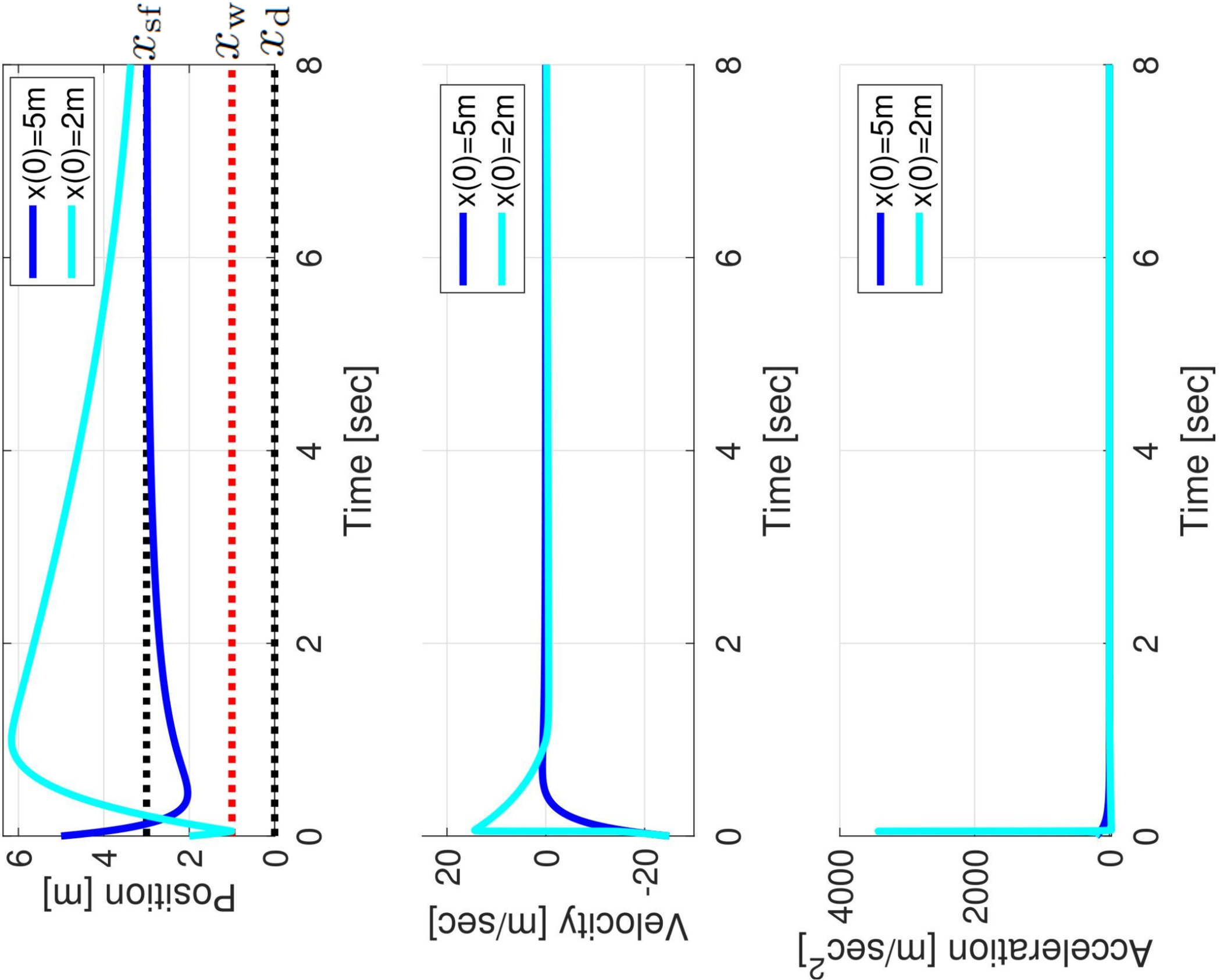}  % The printed column 
\caption{Position, velocity, \& acceleration for overdamped controller} % width is 8.4 cm.
\label{fig.overdamped1}                 % Size the figures 
\end{center}                 % accordingly.
\end{figure}
%%%%%%%%%%%%%%%%%%%%%%%%%%%%%%%%%%%%%%%%%%%%%%%%%%%%%

%%%%%%%%%%%%%%%%%%%%%%%%%%%%%%%%%%%%%%%%%%%%%%%%%%%%%
\begin{figure}
\begin{center}
\includegraphics[width=0.45\textwidth]{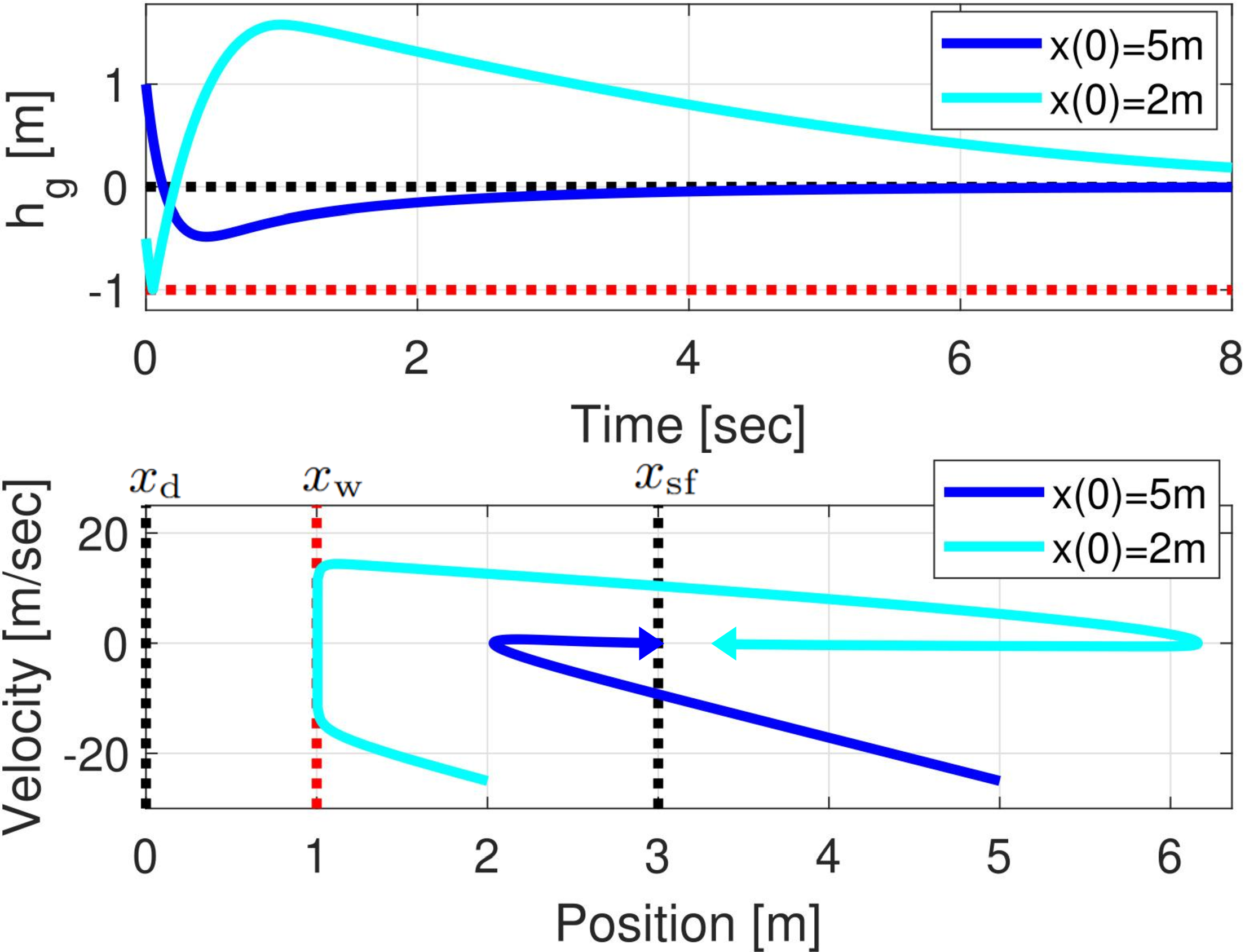}  % The printed column 
\caption{Barrier \& phase plane plot for overdamped controller} % width is 8.4 cm.
\label{fig.overdamped2}                 % Size the figures 
\end{center}                 % accordingly.
\end{figure}
%%%%%%%%%%%%%%%%%%%%%%%%%%%%%%%%%%%%%%%%%%%%%%%%%%%%%

%%%%%%%%%%%%%%%%%%%%%%%%%%%%%%%%%%%%%%%%%%%%%%%%%%%%%
\begin{figure}
\begin{center}
\includegraphics[width=0.59\textwidth, angle=270]{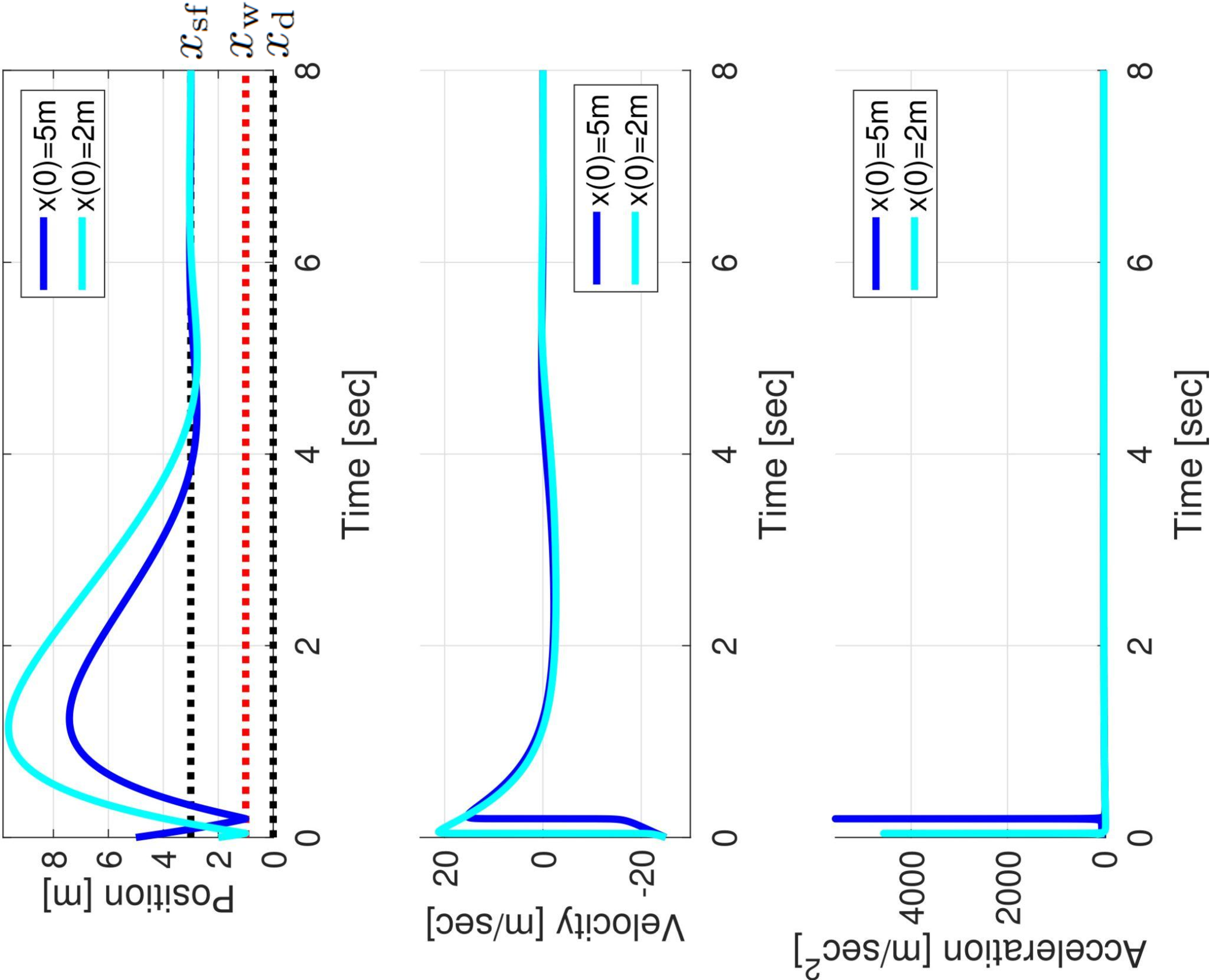}  % The printed column 
\caption{Position, velocity, \& acceleration for underdamped controller} % width is 8.4 cm.
\label{fig.underdamped1}                 % Size the figures 
\end{center}                 % accordingly.
\end{figure}
%%%%%%%%%%%%%%%%%%%%%%%%%%%%%%%%%%%%%%%%%%%%%%%%%%%%%

%%%%%%%%%%%%%%%%%%%%%%%%%%%%%%%%%%%%%%%%%%%%%%%%%%%%%
\begin{figure}
\begin{center}
\includegraphics[width=0.45\textwidth]{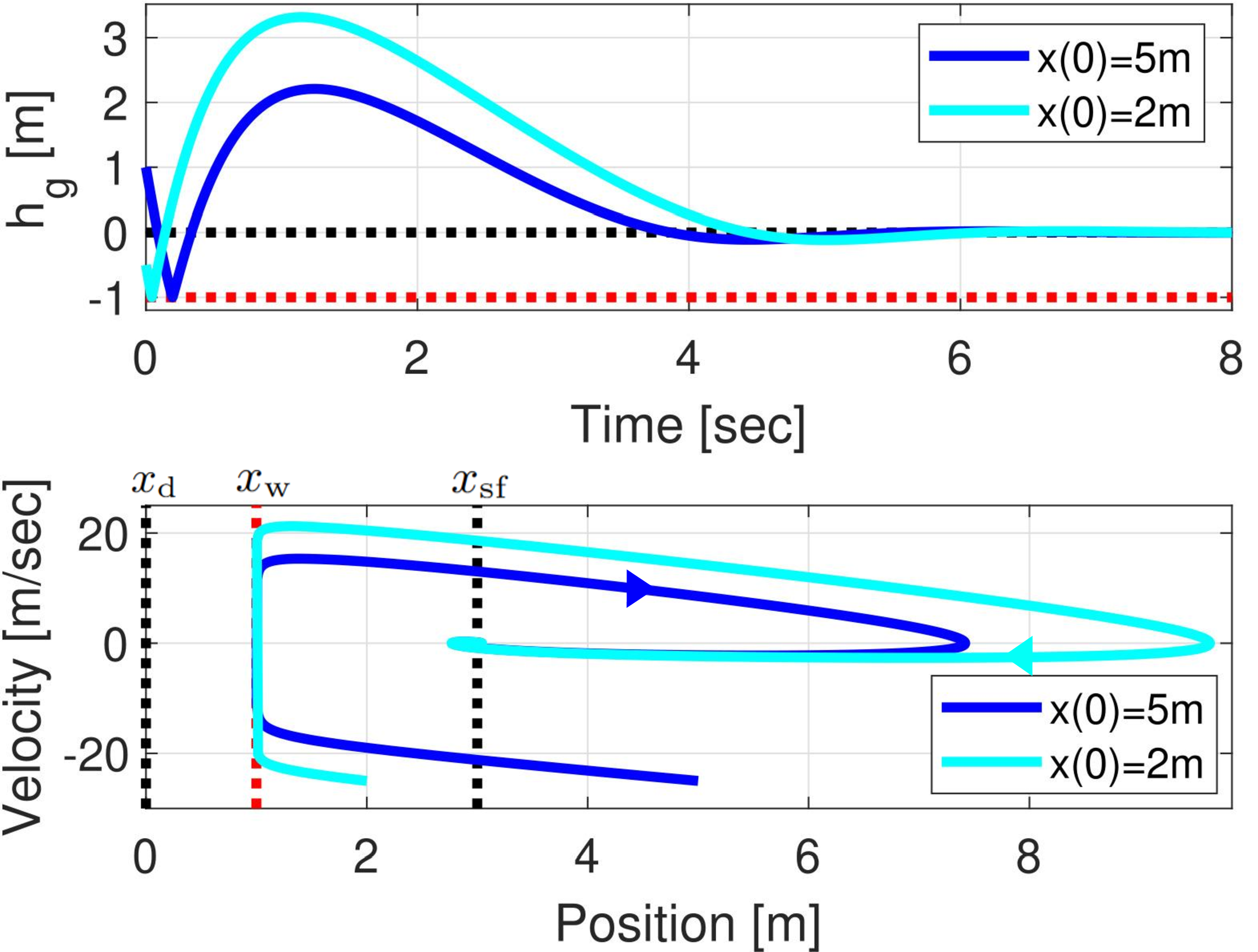}  % The printed column 
\caption{Barrier \& phase plane plot for underdamped controller} % width is 8.4 cm.
\label{fig.underdamped2}                 % Size the figures 
\end{center}                 % accordingly.
\end{figure}
%%%%%%%%%%%%%%%%%%%%%%%%%%%%%%%%%%%%%%%%%%%%%%%%%%%%%

Figures~\ref{fig.overdamped1}-\ref{fig.underdamped2} plot the simulation results for the over- and underdamped graceful safety controllers. 
In the $h_g$ plots, the black dotted line and the red dotted line represent the boundaries of the primary safe layer and the secondary failsafe layer, respectively. 

As shown in these figures, in all scenarios, the graceful safety controller successfully avoids collision and brings the object's position towards $x_\mathrm{sf}$ by utilizing very high acceleration values. 
The maximum acceleration imposed by the controller is quite large but finite (namely, ${\sim 10^{-8} \ {\rm m/s}^2}$). 
This is consistent with the fact that the catastrophic boundary at ${h_g = -1}$, which corresponds to an infinite value of the Lyapunov candidate function used earlier in this article for analyzing the properties of the controller, is approached but never reached.

At least three interesting distinctions are visible between the under- and overdamped graceful safety controllers. 
First, the overdamped controller always returns to safety exponentially, whereas the underdamped controller exhibits more oscillatory behavior, as expected. 
Second, the overdamped controller applies peak accelerations of ${3400\ {\rm m/s}^2}$ for ${x(0)=2\ {\rm m}}$ and ${200\ {\rm m/s}^2}$ for ${x(0)=5\ {\rm m}}$. 
In contrast, the underdamped controller applies peak accelerations of ${4500\ {\rm m/s}^2}$ for ${x(0) = 2\ {\rm m}}$ and ${5500\ {\rm m/s}^2}$ for ${x(0) = 5\ {\rm m}}$. 
Third, the underdamped controller returns the system to safety faster than the overdamped controller. 
These tradeoffs, while may not be surprising, highlight the potential value of treating the graceful safety controller's damping ratio as one of multiple tunable controller design parameters. 
Overall, these simulation results successfully prove the effectiveness of the proposed controller and motivate the need for ``grace" in safety-critical control.

%%%%%%%%%%%%%%%%%%%%%%%%%%%%%%%%%%%%%%%%%%%%%%%%%%%%%%%%%%%%%%
\section{Conclusion}\label{sec:conclusion}
%%%%%%%%%%%%%%%%%%%%%%%%%%%%%%%%%%%%%%%%%%%%%%%%%%%%%%%%%%%%%%

Control barrier functions (CBFs) are widely used in the safe control literature due to their rigorous safety guarantee and natural switching control ability. 
However, the existing literature on CBFs creates a single definition of what it means to be in an unsafe state, while there exist multiple layers of safety in practice. 
To address this gap, a graceful safety control framework was introduced in this paper. 
The proposed framework utilizes a nonlinear barrier function to define a primary desired safety layer and a secondary failsafe layer, then imposes zeroing and reciprocal CBF-like conditions on them, respectively. 
The effectiveness of the proposed controller is demonstrated via mathematical proofs and simulation studies. 
In the present study, only first and second-order graceful safety control was considered for simplicity. 
One potential topic for future work is generalizing these findings to systems with even higher relative degrees, thereby extending its applicability and versatility.

%%%%%%%%%%%%%%%%%%%%%%%%%%%%%%%%%%%%%%%%%%%%%%%%%%%%%%%%%%%%%%
\section*{Acknowledgment}
%%%%%%%%%%%%%%%%%%%%%%%%%%%%%%%%%%%%%%%%%%%%%%%%%%%%%%%%%%%%%%

Support for this research was provided by the U.S. National Science Foundation (NSF). 
The authors gratefully acknowledge this support. 
This paper reflects the opinions of the authors, not NSF.

\bibliographystyle{plain}    % Include this if you use bibtex 
\bibliography{autosam}      % and a bib file to produce the 
                 % bibliography (preferred). The
                 % correct style is generated by
                 % Elsevier at the time of printing.

%\begin{thebibliography}{99}   % Otherwise use the 
                 % thebibliography environment.
                 % Insert the full references here.
                 % See a recent issue of Automatica 
                 % for the style.
% \bibitem[Heritage, 1992]{Heritage:92}
%   (1992) {\it The American Heritage. 
%   Dictionary of the American Language.}
%   Houghton Mifflin Company.
% \bibitem[Able, 1956]{Abl:56}
%   B.~C.~Able (1956). Nucleic acid content of macroscope. 
%   {\it Nature 2}, 7--9. 
% \bibitem[Able {\em et al.}, 1954]{AbTaRu:54}  
%   B.~C. Able, R.~A. Tagg, and M.~Rush (1954).
%   Enzyme-catalyzed cellular transanimations.
%   In A.~F.~Round, editor, 
%   {\it Advances in Enzymology Vol. 2} (125--247). 
%   New York, Academic Press.
% \bibitem[R.~Keohane, 1958]{Keo:58}
%   R.~Keohane (1958).
%   {\it Power and Interdependence: 
%   World Politics in Transition.}
%   Boston, Little, Brown \& Co.
% \bibitem[Powers, 1985]{Pow:85}
%   T.~Powers (1985).
%   Is there a way out?
%   {\it Harpers, June 1985}, 35--47.

%\end{thebibliography}

\end{document}